\newcolumntype{L}[1]{>{\raggedright\let\newline\\\arraybackslash\hspace{0pt}}m{#1}}
\newcolumntype{C}[1]{>{\centering\let\newline\\\arraybackslash\hspace{0pt}}m{#1}}
\newtheorem{proposition}{Proposition}
\DeclareMathOperator{\EX}{\mathbb{E}} 
\DeclareMathOperator{\Var}{\mathbb{V}} 
\newcommand{\transpose}{^{\mkern-1.5mu\mathsf{T}}}
\newcommand{\hermitian}{^{\mathsf{H}}} 
\newcommand{\conj}{^*} 
\DeclareMathOperator*{\argmin}{argmin}
\DeclareMathOperator*{\argmax}{argmax}
\DeclareMathOperator{\Tr}{Tr}
\DeclareMathOperator{\I}{\mathsf{j}}
\let\Re\relax
\DeclareMathOperator{\Re}{Re} 
\let\Im\relax
\DeclareMathOperator{\Im}{Im} 
\newcommand{\norm}[1]{\left\lVert #1 \right\rVert}
\newcommand{\set}[1]{\left\{#1\right\}}
\newcommand{\complexs}{\mathbb{C}}
\newcommand{\pilot}{\bm{\phi}}
\newcommand{\truepilot}{\bm{\phi}_n}
\newcommand{\falsepilot}{\bm{\phi}_{n'}}
\newcommand{\channel}{\bm{g}}
\newcommand{\truechannel}{\bm{g}_{n}}
\newcommand{\falsechannel}{\bm{g}_{n'}}
\newcommand{\ntruechannel}{\bm{z}_{n}}
\newcommand{\nfalsechannel}{\bm{z}_{n'}}
\newcommand{\uplinkchannel}{\bm{h}}
\newcommand{\noise}{\bm{W}}
\newcommand{\channelset}{\mathcal{G}}
\newcommand{\uplinkchannelset}{\mathcal{H}}
\newcommand{\Y}{\bm{Y}}
\newcommand{\SNR}{\rho}
\newcommand{\eSNR}{\rho M \beta}
\newcommand{\repart}[1]{\Re\left\{#1\right\}}
\newcommand{\impart}[1]{\Im\left\{#1\right\}}
\newcommand{\metric}[2]{\zeta_{\text{#2}}\left(#1\right)}
\newcommand{\abs}[1]{\left|#1\right|}
\newcommand{\pilotmatrix}{\bm{\Phi}}
\newcommand{\newpilotmatrix}{\bm{\Phi}^\text{new}}
\newcommand{\channelmatrix}{\bm{G}}
\newcommand{\CN}[2]{\mathcal{CN}\left( #1,#2 \right)}
\newcommand{\mostlikelypilot}{n: \pilot = s(\channel_n),\, \channel_n \in \channelset}
\newcommand{\mostlikelychannelandpilot}{\channel\in\channelset, \pilot:\pilot=s(\channel)}
\newcommand{\mostlikelychannel}{\channel\in\channelset}
\newcommand{\SequenceGeneratingAlgorithm}{SGA}
\newcommand{\UnknownAlgorithm}{UPSCA}
\newcommand{\KnownAlgorithm}{KPSCA}
\title{Combining Reciprocity and CSI Feedback in MIMO Systems}
\author{\IEEEauthorblockN{Ema~Becirovic,  Emil~Bj\"{o}rnson and Erik~G.~Larsson} \\
\thanks{This work was supported in part by Grants 2019-05068 and D0760701 from the Swedish Research Council (VR), in part by ELLIIT, and in part by the Knut och Alice Wallenberg (KAW) Foundation. Part of the material in this paper was presented at the 54th Asilomar Conference on Signals, Systems, and Computers, Pacific Grove (virtual), California, November 1--5, 2020 \cite{confver}.}
\thanks{E. Becirovic and E.~G.~Larsson are with the  Department of Electrical Engineering (ISY), Link\"{o}ping University,  Link\"{o}ping, Sweden (e-mail: \{ema.becirovic, erik.g.larsson\}@liu.se). E.~Bj\"ornson was with the  Department of Electrical Engineering (ISY), Link\"{o}ping University,  Link\"{o}ping, Sweden, and is now with the Dept. of Computer Science, KTH Royal Institute of Technology, Kista, Sweden (email: emilbjo@kth.se). }%
}
\begin{document}
\maketitle

\begin{abstract}
	Reciprocity-based time-division duplex (TDD) Massive MIMO (multiple-input multiple-output) systems utilize channel estimates obtained in the uplink to perform precoding in the downlink. However, this method has been criticized of breaking down, in the sense that the channel estimates are not good enough to spatially separate multiple user terminals, at low uplink reference signal signal-to-noise ratios, due to insufficient channel estimation quality. Instead, codebook-based downlink precoding has been advocated for as an alternative solution in order to bypass this problem. We analyze this problem by considering a ``grid-of-beams world'' with a finite number of possible downlink channel realizations. Assuming that the terminal accurately can detect the downlink channel, we show that in the case where reciprocity holds, carefully designing a mapping between the downlink channel and the uplink reference signals will perform better than both the conventional TDD Massive MIMO and frequency-division duplex (FDD) Massive MIMO approach. We derive elegant metrics for designing this mapping, and further, we propose algorithms that find good sequence mappings.

\end{abstract}
\IEEEpeerreviewmaketitle

\section{Introduction} 
Massive MIMO is the currently most compelling physical-layer access technique. 
Its information theory and fundamental limits are well understood \cite{redbook,massivemimobook}, and it was standardized and commercialized in 5G \cite{Boccardi14,Parkvall17}. 
Massive MIMO uses a large number of active antennas at the base station to adaptively direct the reception and transmission.
Since the transmissions are highly directive, spatial multiplexing of user terminals is possible. 
To properly steer the transmission, the base station needs to know the downlink channels.
Hence, channel estimation is very important in Massive MIMO.

When the downlink transmission is based on poor channel estimates, the base station can only loosely focus the signals at the users in the downlink. 
This has particularly large impact on the interference between users that are spatially multiplexed. 
When the channel estimates are good, beamforming methods such as zero-forcing can be used to spatially suppress interference between the users. 
This is no longer possible when the channel estimates are of low quality, thus there will be large interference between the users and in some cases it might even be preferable to serve one user per time-frequency resource instead of performing spatial multiplexing.

Massive MIMO can operate in both time-division duplex (TDD) and frequency-division duplex (FDD) mode. In these two modes, the techniques for channel estimation differ since in TDD mode reciprocity can be taken advantage of for the base station to learn the downlink channel while in FDD mode the terminal needs to first learn the downlink channel and then communicate it to the base station.

In FDD Massive MIMO where codebook-based precoding is used, the user terminal acquires an estimate of the downlink channel, e.g., the base station does a beam sweep and the terminal decides which of the beams fits best with the true channel, or through downlink pilots, and uses uplink signaling to transmit the (quantized) downlink channel response which corresponds to codebook indices \cite{Marzetta06,Love08,Caire10,Lee16,Huang18}. 
Note that codebook-based precoding is also used in some versions of TDD Massive MIMO, however, it is used analogously to FDD Massive MIMO in the sense that it does not utilize the reciprocity of TDD. Hence, we will refer to all codebook-based precoding as FDD \cite[Ch.~11]{Dahlman18}.
The precoder codebooks need to have high resolution, since the downlink channel estimates need to be of high quality to be able to suppress interference. 
Hence, the precoder codebook will grow very large when the number of antenna ports is large. 
This in turn requires significant uplink resources for feedback of precoder indices from the terminal to the base station; the number of feedback bits will grow linearly with the number of antenna ports \cite{Jindal06}. 
Furthermore, the number of precoder codebooks can be very large in Massive MIMO as there needs to exist at least one precoder codebook for every possible combination of number of base station antennas and number of transmit layers at each terminal. 
In order to combat these difficulties, works have been done which harness some reciprocal features, e.g. in the angular domain, in FDD systems \cite{Liu19, Abdallah20,Zhong20}.

Even though Massive MIMO can operate in both TDD and FDD mode, it is most compelling in TDD mode since uplink-downlink reciprocity of radio propagation can be exploited at the base station by estimating the downlink channel from uplink reference signals.
This makes TDD Massive MIMO scalable with respect to the number of base station antennas.
TDD is also the most commonly used duplex mode in 5G since most bands are reserved for TDD.
Conventionally, in reciprocity-based TDD Massive MIMO, the user terminals are assigned one reference signal from a predefined set of reference signals. 
The base station estimates the channels based on these reference signals. 
The uplink reference signals are designed to have low mutual correlation, so that users that transmit reference signals will not interfere with each other to any large extent, thus avoiding the phenomenon called pilot contamination \cite{Elijah16}.

One practical concern has been that of operation in conditions when the received reference signal signal-to-noise-ratio (SNR) is very low. 
The coherent gain (``useful received power'') in TDD Massive MIMO is proportional to $\beta\gamma$, where $\beta$ is the path loss and $\gamma$ is the channel estimate mean-square, which itself scales proportionally to $\beta$ at low SNR \cite[Ch.~3]{redbook}.
Hence, at low SNR, a 3-dB increase in path loss effectively yields a 6-dB loss in coherent beamforming gain. 

The base station can generally transmit with much higher power than the user terminals, which leads to a large difference between the SNR that can be achieved in the downlink and uplink. 
For example, if the downlink transmit power is 40~W while the uplink transmit power is 0.1~W, then the downlink SNR will be 26 dB higher than the uplink SNR. 
Consequently, the range (coverage) of the downlink reference signals used for downlink channel estimation is longer than the range of the uplink reference signals. 
Some users will have a high downlink SNR and a low uplink SNR.
This limits the use of reciprocity-based channel estimation techniques, where the base station learns the channel from uplink reference signals, to terminals that are in the center of the coverage area---where both the uplink and downlink SNRs are sufficiently high. 

When there is no structure in the channel and the minimal number of orthogonal resources are used for the purpose of channel estimation, the number of resources spent on downlink channel estimation scales with the number of base station antennas, while the uplink channel estimation scales with the number of (single antenna) terminals. By this strategy, which is used in practice, the effective SNR in the uplink scales with the number of users. However, to bridge the gap of more than 20~dB discrepancy between the uplink and downlink SNR, more than 100 users should be served in the same coherence interval. Hence, in the currently practical scenarios, the effective downlink SNR is still much higher than the effective uplink SNR.

An existing way to mitigate the problem of the uplink-downlink SNR discrepancy is to only assign a small fraction of the frequency band to cell-edge terminals, so that these can focus their power in that band and thereby increase the uplink SNR \cite[Ch.~6]{Asplund20}. 
However, this will greatly reduce the data rate that such a terminal can achieve, since the rate is proportional to the assigned bandwidth. 
Moreover, for the exemplified power numbers, the uplink SNR will remain smaller than the downlink SNR, unless only a fraction 1/400 of the bandwidth is assigned to the terminal.
Another solution is to only send the uplink reference signals on a small fraction of the frequency band and then extrapolate the channel estimates over the whole band. This solution works well when there is a structure in the channel, e.g. a line-of-sight channel. 

In summary, the problems with existing methods for the base station to obtain channel estimates for downlink precoding are:
\begin{itemize}
	\item With pure reciprocity-based precoding, relying on uplink pilots for channel estimation at the base station, coverage is a problem especially with large bandwidths.
	\item With pure codebook-based precoding, large amounts of uplink resources are required for terminal-to-base station feedback of quantized channel estimates. Here, low uplink SNRs is also a problem, however, it is not clear when codebook-based precoding and reciprocity based precoding break down. (In addition, downlink resources for downlink pilots are needed.)
\end{itemize}

It was claimed, e.g., in  \cite{Abdallah20},\cite[Ch.~13]{Asplund20} that in situations where the received reference signal SNR is insufficient, channel state information (CSI) feedback techniques, used in FDD Massive MIMO and in some cases TDD Massive MIMO, would be more effective than using the reciprocity. 
In this paper we explain why if reciprocity holds, it is always advantageous to exploit it. 

\subsection{Technical Contributions}
We consider a cell where the base station is equipped with multiple antenna ports and communicates with a user terminal. 
The number of antennas at the base station is not necessarily ``massive'', however, since channel estimation is especially important in Massive MIMO, this is the case we have studied.
We consider both the cases where the channel is reciprocal, operating in TDD mode with reciprocity calibrated arrays, and where the channel is non-reciprocal, operating in FDD mode or with an uncalibrated array. 
In the considered scenario, due to high downlink SNR, the terminal has access to an accurate estimate of the downlink channel. 

We propose a solution that uses the accurate downlink channel estimates, available at the terminal, and the reciprocity of the channel to improve on the state-of-the-art and circumvent the problem with low uplink SNR.
More precisely, when reciprocity holds, the proposed solution uses the downlink channel estimate, which is available at the user terminal, to redesign the uplink reference signal transmission to improve the channel estimation at the base station. Conventionally, the user transmits a known pre-determined reference sequence in the uplink, which the base station can utilize to estimate the channel that it was transmitted over. In the proposed solution, in contrast, the reference signal is not fixed, but instead selected by the user terminal as a function of the downlink channel estimate. The function mapping between the downlink channel estimate and the uplink reference signal is pre-determined and known, so that the base station can jointly estimate the channel and the reference signal. This reduces the estimation errors as compared to having a pre-determined (fixed) reference signal. We furthermore develop an algorithm that finds optimized mapping between the channel estimate and the uplink reference signal, based on pairwise error probability that we obtain in closed form.

We compare our proposed solution to the conventional FDD approach where the terminal feeds back an uplink reference signal based on the estimated downlink channel. We show through simulations that utilizing reciprocity through our proposed solution is always favorable when reciprocity holds since the channel estimates will be of higher quality.

In summary, the technical contributions of this paper are:
\begin{itemize}
	\item In the proposed solution, the terminal selects the uplink reference signal as a function of the downlink channel.
	\item We find elegant metrics that, when minimized, lead to minimal pairwise error probabilities. Hence, good sequence mappings minimize the metrics.
	\item We propose an algorithmic approach to find good sequence mappings.
	\item Numerical simulations show that, when reciprocity holds, the proposed solution outperforms both conventional reciprocity-based channel estimation, conventionally used in TDD Massive MIMO, and feeding back the quantized channel estimates, conventionally used in FDD Massive MIMO. 
\end{itemize}

Part of this work was presented in \cite{confver}. In that paper we only considered a subset of the cases that arise, see entries in Table~\ref{tab:cases} marked with $\dagger$ (to be explained in more detail later). More specifically, we only considered a subset of the cases where reciprocity holds. In this paper we complete the study by covering all the cases and also comparing against a non-reciprocal system.

\subsection{Notation}
We use italicized, $x$, bold lowercase, $\bm{x}$, bold uppercase, $\bm{X}$, and calligraphic, $\mathcal{X}$, characters  to denote scalars, vectors, matrices, and sets, respectively. Probability density functions are denoted by $p(\cdot)$. $\bm{I}$ is used to denote the identity matrix and $\bm{0}$ denotes an all-zero vector, both with appropriate (clear from the context) dimensions. $(\cdot)\conj$, $(\cdot)\transpose$, and $(\cdot)\hermitian$ denote the conjugate, the transpose, and the conjugate transpose, respectively. The imaginary unit is denoted by $\I$. $\mathcal{CN}(\bm{0},\bm{R})$ denotes the (multivariate) circularly symmetric complex Gaussian distribution with covariance matrix $\bm{R}$. All the norms, $\norm{\cdot}$, in this paper denote the Frobenius norm. We denote Hadamard division by $\oslash$.

\section{System Model with Grid-of-Beams}\label{sec:system-model}

We study the case when a single single-antenna terminal is served by a base station equipped with $M$ antennas. We consider a grid-of-beams world; the downlink channel, $\channel \in \complexs^M$, is unknown at the base station but is one of the $N$ vectors in the set \begin{equation}
\channelset = \set{\channel_1,\dots,\channel_N},
\end{equation} where $N$ is in practice often larger than $M$. For instance, such a world can be useful to study when grid-of-beams-based algorithms are used for channel estimation, i.e., beam sweeping type algorithms. We can also view the grid-of-beams world as a special case of the framework in \cite{Hoon12,Adhikary13}. In \cite{Hoon12,Adhikary13}, the world is split into clusters where the users within each such bin have the same statistics. One way to interpret $\channelset$ is as a set of ``typical''  channels from each bin. Further, due to asynchronicity between the terminal and the base station and/or uncertainties of the distance between them, the channel might be phase shifted in which case the effective downlink channel is $e^{\I\theta}\channel$, where $\theta$ is the phase shift. This can happen if e.g., there is a phase mismatch in the sender and receiver chains of the terminal.

The uplink channel is $\uplinkchannel \in \mathbb{C}^M$. Reciprocity may hold, in which case $\uplinkchannel = e^{\I\theta}\channel$. In practice, there
will be a mismatch between the true channel and the ones
in $\channelset$, but we disregard this in the analytical part to derive methods of detecting said channel, i.e., we assume perfect calibration. 
This system model is also applicable in ``continuous'' worlds where the true channel is instead quantized to one of the beams in $\channelset$, as shown in the numerical results in Section~\ref{sec:results}.
Further, the base station array can be calibrated such that the uplink channel is one of the vectors in the set $\uplinkchannelset\neq\channelset$, $\uplinkchannel \in \uplinkchannelset$.
All the vectors in the set of downlink channels have the same norm, 
\begin{equation}
\norm{\bm{g}}^2 = M\beta,
\end{equation}
that is known at the base station. 

The goal of the base station is to estimate (detect) the downlink channel based on a signal, $\pilot\in\complexs^\tau$, where we assume that the signal is $\tau$ samples long, transmitted in the uplink. The terminal sends this signal and the base station receives
\begin{equation}
\Y = \sqrt{\SNR}\uplinkchannel\pilot\transpose + \noise  \in\complexs^{M\times\tau}, \label{eq:recievedsignal}
\end{equation}
where $\noise\in\complexs^{M\times\tau}$ is noise with independent $\CN{0}{1}$ elements. The sequence, $\pilot$ is either deterministic or chosen as a function of $\channel$. In either case, it has unit norm
\begin{equation}
\norm{\pilot}^2 = 1,
\end{equation}
which makes the parameter $\SNR$ have the interpretation of SNR after coherent integration over the sequence. For convenience, we define $\channelmatrix = [\channel_1,\dots,\channel_N]\in\complexs^{M\times N}$. 

\section{Downlink Channel Detection}\label{sec:dl-channel-detection}
In this section we will present how the base station detects the downlink channel. The detection process will be different depending on the uplink channel, i.e., if reciprocity holds or not, and whether the terminal knows the downlink channel, either partially or fully, or not. 
There are eight cases of concern, see Table~\ref{tab:cases}.  First of all, we do not study the cases where reciprocity does not hold and the terminal does not know the downlink channel $\channel$ since there is no way for the base station to gain knowledge about the downlink channel in this case. Additionally, the case where the terminal only knows $\theta$ is not treated since this situation will not occur in practice. However, note that the result in this scenario is a special case of when the terminal has full CSI, i.e., knows both $\channel$ and $\theta$, treated in Section~\ref{sec:reciprocity-full-csi}. Similarly, the case when the terminal has no CSI, i.e., neither knows $\channel$ nor $\theta$, treated in Section~\ref{sec:reciprocity-no-csi}, is a special case of the scenario where $\channel$ is known and $\theta$ is unknown, treated in Section~\ref{sec:reciprocity-only-g}. Nonetheless, the case with no CSI is treated explicitly since it corresponds to the case where the terminal has not acquired channel knowledge, which using a fixed uplink pilot corresponds to the conventional reciprocity-based channel estimation schemes. 

The cost of estimating the downlink channel at the terminal is not considered in this paper. In practice, if there is no structure of the channel to be exploited, the resources needed to estimate the channel reliably scales with $M$, which can be very large in massive MIMO \cite[Ch.~1]{redbook}\cite[Ch.~1]{massivemimobook}. However, if there is such structure, e.g., it is known that $\channel \in \channelset$, the required resources are much fewer \cite{Bjornson16}. Nonetheless, the terminal will act as if the downlink channel estimate it has is the true channel.

\begin{table}
	\caption{The cases of downlink channel detection covered in the paper. Preliminary results of cases marked with $\dagger$ are discussed in \cite{confver}.}\label{tab:cases}
	\centering
	\begin{tabular}{|L{2cm}|C{3cm}|C{3cm}|C{3cm}|C{3cm}|}
		\hline
		& \multicolumn{4}{c|}{terminal knowledge}\\ \cline{2-5}
		& \multicolumn{2}{c|}{$\channel$ unknown} & \multicolumn{2}{c|}{$\channel$ known} \\ \cline{2-5}
		& $\theta$ unknown & $\theta$ known & $\theta$ unknown & $\theta$ known\\ \hline
		reciprocity holds & No CSI at the terminal, \newline Section~\ref{sec:reciprocity-no-csi}$^\dagger$ & {
			Does not occur in practice, special case of Section~\ref{sec:reciprocity-full-csi}} & Partial CSI at the terminal, \newline Section~\ref{sec:reciprocity-only-g} & Full CSI at the terminal, \newline Section~\ref{sec:reciprocity-full-csi}$^\dagger$ \\ [7pt]
		\hline 
		reciprocity does not hold  & \multicolumn{2}{c|}{Not of interest -- beamforming cannot be performed}  & \multicolumn{2}{c|}{Section~\ref{sec:no-reciprocity}} 
		\\ [7pt] \hline
		reciprocity does not hold, but $\uplinkchannel \in \uplinkchannelset$ &  \multicolumn{2}{c|}{Not of interest -- beamforming cannot be performed}  & \multicolumn{2}{c|}{Section~\ref{sec:calibrated-array}}
		\\ \hline
	\end{tabular}
\end{table}

\subsection{Reciprocity Holds}\label{sec:reciprocity}
When reciprocity holds, the uplink channel is the same as the effective downlink channel, $\uplinkchannel = e^{\I\theta}\channel$.
The received signal \eqref{eq:recievedsignal} at the base station is in this case
\begin{equation}
	\Y = \sqrt{\SNR}e^{\I\theta}\channel\pilot\transpose+\noise.\label{eq:reciprocity-recieved-signal}
\end{equation}
If the terminal knows the phase shift $\theta$, we assume that it will compensate for it by rotating the symbols of $\pilot$ such that the base station receives
\begin{equation}
\Y = \sqrt{\SNR}\channel\pilot\transpose+\noise.\label{eq:reciprocity-recieved-signal-theta-known}
\end{equation}
There are three cases of channel estimation in this case, which we will consider below.

\subsubsection{Neither $\channel$ nor $\theta$ are known}\label{sec:reciprocity-no-csi}
In the first case, the terminal has no CSI which means that both the phase shift and the channel are unknown at the terminal.
Both the base station and the terminal know the sequence, which effectively becomes a pilot. Because the
terminal does not know the channel, the same pilot will be
transmitted regardless of which of the beams in $\channelset$ the channel corresponds to. This
setup resembles standard pilot-based channel estimation but
with the additional side-information that the channel is one of
the vectors in $\channelset$, and affected by a phase shift. 

First, the base station receives the signal \eqref{eq:reciprocity-recieved-signal}. 
The base station jointly finds the beam and the phase shift with the highest likelihood value, i.e., the maximum-likelihood estimate, through
\begin{equation}
\argmax_{\theta,\channel\in\channelset} p\left(\Y;\channel,\theta\right) = \argmin_{\theta,\channel\in\channelset} \norm{\Y-\sqrt{\SNR}e^{\I\theta}\channel\pilot\transpose},
\end{equation}
by first determining the maximum-likelihood estimate of the phase shift conditioned on each beam in $\channelset$ \cite[Ch.~6]{kay98detection}, 
\begin{align}
\hat{\theta}_n &= \argmax_{\theta} p\left(\Y;\channel_n,\theta\right)
= -\arg \pilot\transpose\Y\hermitian\channel_n,\label{eq:ML-estimate-phase-shift}
\end{align}
for $n=1,\dots,N$.
The base station continues by using the phase shift estimates in the detector which finds the beam with largest concentrated likelihood value, i.e., using the maximum likelihood detector	 \cite[Ch.~6]{kay98detection}
\begin{align}
\argmax_{\mostlikelychannel} p\left(\Y;\channel,\hat{\theta}\right) =\argmin_{\mostlikelychannel}  \norm{\Y - \sqrt{\SNR}e^{\I\hat{\theta}}\channel\pilot\transpose} ,
\end{align}
which is equivalent to
\begin{align}
\max_{\mostlikelychannel} \repart{ e^{\I\hat{\theta}}\pilot\transpose\Y\hermitian \channel} &= \max_{\mostlikelychannel} \repart{ e^{-\I\arg\pilot\transpose\Y\hermitian\channel}\pilot\transpose\Y\hermitian \channel} \nonumber\\
&= \max_{\mostlikelychannel} \abs{\pilot\transpose\Y\hermitian\channel}.
\end{align}

If the true channel is $\truechannel$, we make an error in favor of another channel, $\falsechannel$, if 
\begin{align}
\abs{\pilot\transpose\Y\hermitian\falsechannel} &>\abs{\pilot\transpose\Y\hermitian\truechannel} \iff\\
\abs{\sqrt{\SNR}\truechannel\hermitian\falsechannel+e^{\I\theta}\pilot\transpose\noise\hermitian\falsechannel} &>\abs{\sqrt{\SNR}M\beta+e^{\I\theta}\pilot\transpose\noise\hermitian\truechannel}.
\end{align}
\begin{proposition}\label{prop:pep-theta-unknown}
	In the case where reciprocity holds, and the terminal neither knows the channel, $\channel$, nor the phase shift, $\theta$, the pairwise error probability, the probability that channel $\falsechannel$ will be detected given that the true channel is $\truechannel$, will decrease when $\abs{\truechannel\hermitian\falsechannel}$ decreases or $\sqrt{\SNR M \beta}$ increases.
\end{proposition}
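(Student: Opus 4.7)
The plan is to reduce the PEP to an explicit function of only $|\truechannel\hermitian \falsechannel|$ and $\sqrt{\SNR M\beta}$, and then read the two claimed monotonicities off that function.

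Substituting \eqref{eq:reciprocity-recieved-signal} into the error event stated just above the proposition gives
\[
\bigl|\sqrt{\SNR}\,\truechannel\hermitian \falsechannel + V\bigr|^2 > \bigl|\sqrt{\SNR}\,M\beta + U\bigr|^2,
\]
with $U := \pilot\transpose \noise\hermitian \truechannel$ and $V := \pilot\transpose \noise\hermitian \falsechannel$. Using $\|\pilot\|^2 = 1$ and $\|\truechannel\|^2 = \|\falsechannel\|^2 = M\beta$ one finds that $(U, V)$ is jointly circularly symmetric complex Gaussian, zero mean, with $\EX[|U|^2] = \EX[|V|^2] = M\beta$ and $\EX[V U\conj] = \truechannel\hermitian \falsechannel$. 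The phase $\theta$ drops because magnitudes are taken, and rotating $V$ by the phase of $\truechannel\hermitian \falsechannel$ lets me assume $c := \truechannel\hermitian \falsechannel \in [0, M\beta]$; so the PEP is a function of $c$ and $\nu := \sqrt{\SNR M\beta}$ only.

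Next I decouple the noises by writing $V = (c/(M\beta)) U + \tilde W$ with $\tilde W \sim \CN{0}{M\beta(1-\alpha^2)}$ independent of $U$, where $\alpha := c/(M\beta)$. Normalizing $Z' := (\sqrt{\SNR}\,M\beta + U)/\sqrt{M\beta}\sim \CN{\nu}{1}$ and $\tilde W' := \tilde W/\sqrt{M\beta(1-\alpha^2)} \sim \CN{0}{1}$ collapses the error event to
\[
\bigl|\alpha Z' + \sqrt{1-\alpha^2}\,\tilde W'\bigr|^2 > |Z'|^2.
\]
Conditioning on $Z'$ turns the left-hand side into a scaled non-central $\chi^2_2$ variable, so the conditional PEP equals the Marcum $Q_1\bigl(\alpha|Z'|\sqrt{2/(1-\alpha^2)},\,|Z'|\sqrt{2/(1-\alpha^2)}\bigr)$, and the unconditional PEP is its expectation against the non-central $\chi^2_2$ law (non-centrality $\nu^2$) of $|Z'|^2$.

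Monotonicity in $\nu$ is the easier half. Using the standard identities $\partial_a Q_1(a,b) = b e^{-(a^2+b^2)/2} I_1(ab)$ and $\partial_b Q_1(a,b) = -b e^{-(a^2+b^2)/2} I_0(ab)$, one computes $\frac{d}{dr} Q_1(ar, br) \propto aI_1(abr^2) - bI_0(abr^2)$, which is negative whenever $b > a$ since $I_1/I_0 < 1 < b/a$; raising $\nu$ stochastically enlarges $|Z'|$, so the expectation shrinks. The main obstacle is monotonicity in $c$, because both arguments of $Q_1$ grow with $\alpha$ and the pointwise derivative in $\alpha$ can flip sign at small $|Z'|$. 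I plan to reparameterize the conditional PEP as $Q_1\bigl(a, \sqrt{a^2 + 2|Z'|^2}\bigr)$ (with $a$ strictly increasing in $\alpha$ for fixed $|Z'|$), differentiate under the expectation, and show that the resulting Bessel-function integrand, though not pointwise non-negative, integrates to a non-negative quantity against the non-central $\chi^2_2$ law of $|Z'|^2$, via a Bessel-ratio bound on $I_1/I_0$ combined with concentration of $|Z'|^2$ around its mean $1 + \nu^2$.
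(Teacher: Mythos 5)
Your reduction is correct and coincides exactly with the paper's: the same joint Gaussian structure of the two noise projections, the same decomposition $w=\alpha x+\sqrt{1-\abs{\alpha}^2}\,y$ with $y$ independent, and the same normalized error event $\abs{\alpha Z'+\sqrt{1-\alpha^2}\,\tilde W'}^2>\abs{Z'}^2$ with $Z'\sim\CN{\nu}{1}$ (the paper's Appendix~\ref{app:proof-pep-theta-unknown} reaches precisely this point, with Proposition~\ref{prop:pep-theta-unknown} obtained as the special case $\truepilot=\falsepilot$). Your argument for monotonicity in $\nu=\sqrt{\SNR M\beta}$ (the derivative of $Q_1(ar,br)$ in $r$ is proportional to $aI_1-bI_0<0$ for $b>a$, combined with stochastic monotonicity of the Rician law in its noncentrality) is sound and, if anything, more explicit than what the paper writes down.

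The gap is the monotonicity in $\abs{\truechannel\hermitian\falsechannel}$, which is the substance of the proposition, and your proposal does not close it. You correctly observe that the conditional PEP given $\abs{Z'}=r$ is \emph{not} monotone in $\alpha$ for small $r$ (at $\alpha=0$ it equals $e^{-r^2}\approx 1$, while as $\alpha\to 1^-$ it tends to $1/2$), so any conditioning-on-$Z'$ route must prove that a sign-indefinite integrand has non-negative integral. Your proposed fix --- a Bessel-ratio bound ``combined with concentration of $\abs{Z'}^2$ around its mean $1+\nu^2$'' --- is only a plan, and it is unlikely to work as stated: at low SNR ($\nu$ small) $\abs{Z'}^2$ is essentially exponential and not concentrated, so the small-$r$ region where the derivative is negative carries order-one probability, and a concentration inequality cannot certify the exact sign of the expectation for all $\nu$. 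The paper itself warns that the Marcum-$Q$/Bessel closed form ``does not simplify the proof that the pairwise error probability increases in $\abs{\alpha}$.'' What the paper does instead is integrate in a different order: marginalize over the \emph{phase} of $y$ first, solve the resulting quadratic in the magnitude $R_y$, then integrate out both $R_y$ (exponential) and $X$ (a Gaussian completion of squares) in closed form. This leaves a single integral over the phase angle $\theta_y$ with integrand $f(z)=ze^{\eSNR z}+(1-z)e^{\eSNR(1-z)}$, where $z=1/(B^2+1)>1/2$ depends monotonically on $\abs{\alpha}$ for each fixed $\theta_y$; showing $f'(z)>0$ then gives \emph{pointwise} monotonicity of the remaining integrand, which is exactly the structure your conditioning destroys. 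To complete your proof you would need either to reproduce such an exact evaluation or to supply a genuine inequality for the $\alpha$-derivative of $\EX\!\left\{Q_1\right\}$; as written, the central claim remains unproven.
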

\begin{proof}
	See Appendix~\ref{app:proof-pep-theta-unknown}.
\end{proof}
Since the same pilot is used regardless of which beam in $\channelset$ the channel corresponds to, and the pilot has unit norm, the choice of pilot will not affect the performance of the detector. That is, if we are given a grid-of-beams world and the terminal does not know, or cannot accurately detect the beam, the only thing we can do to improve the detection performance is to either increase the power, or increase the length of the sequence $\pilot$, effectively increasing the SNR. This is true, regardless of the set of beams, $\channelset$.

\subsubsection{Only $\channel$ is known, not $\theta$}\label{sec:reciprocity-only-g} 
In this case, the terminal has accurately detected the downlink beam $\channel$ from e.g., downlink pilots, but lacks the knowledge of the phase shift $\theta$. This scenario can occur e.g., when there is a phase mismatch in the receiver and transmitter chains of the terminal. In this case, the terminal decides the uplink sequence based on the downlink channel through a mapping
\begin{equation}
\pilot = s(\channel). \label{eq:seqmap}
\end{equation}
In the special case when the terminal maps all possible channels to the same sequence, we fall back to the scenario in Section~\ref{sec:reciprocity-no-csi}.
There are $N$ (possibly non-unique) sequences. The sequences are paired with their corresponding beam such that the $n$:th sequence is based on the $n$:th beam in $\channelset$, $\truepilot = s(\truechannel)$ for $n = 1,\dots,N$. The sequences are for convenience grouped in a matrix $\pilotmatrix = [\pilot_1,\dots,\pilot_N]\in\complexs^{\tau\times N}$. 
The base station uses the phase shift estimate \eqref{eq:ML-estimate-phase-shift}, with $\pilot$ replaced with $\truepilot$, and finds the channel-sequence \emph{pair} with the largest concentrated likelihood value from the received signal \eqref{eq:reciprocity-recieved-signal} through the maximum-likelihood detector \cite[Ch.~6]{kay98detection}
\begin{align}
	\argmax_{\mostlikelychannelandpilot} p\left(\Y;\channel, \pilot, \hat{\theta} \right) &= \argmin_{\mostlikelychannelandpilot} \norm{\Y-\sqrt{\SNR}e^{\I\hat{\theta}}\channel\pilot\transpose}\\
	&=\argmax_{\mostlikelychannelandpilot} \abs{\pilot\transpose\Y\hermitian\channel}.
\end{align}
If the true channel-sequence pair is $(\truechannel,\truepilot)$, we make an error in favor of another channel-sequence pair, $(\falsechannel,\falsepilot)$, if 
\begin{align}
\abs{\falsepilot\transpose\Y\hermitian\falsechannel} &>\abs{\truepilot\transpose\Y\hermitian\truechannel} \iff\\
\abs{\sqrt{\SNR}\truepilot\hermitian\falsepilot\truechannel\hermitian\falsechannel+e^{\I\theta}\pilot\transpose\noise\hermitian\falsechannel} &>\abs{\sqrt{\SNR}M\beta+e^{\I\theta}\pilot\transpose\noise\hermitian\truechannel}.
\end{align}
\begin{proposition}\label{prop:pep-partial-csi}
	In the case where reciprocity holds, the terminal knows the channel, $\channel$, but not the phase shift, $\theta$, and the terminal selects the uplink reference sequence based on $\channel$, the pairwise error probability, the probability that channel $\falsechannel$ will be detected given that the true channel is $\truechannel$, will decrease when $\abs{\truepilot\hermitian\falsepilot\truechannel\hermitian\falsechannel}$ decreases or when $\sqrt{\SNR M \beta}$ increases.
\end{proposition}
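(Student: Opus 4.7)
The plan is to reduce Proposition~\ref{prop:pep-partial-csi} to Proposition~\ref{prop:pep-theta-unknown} by exhibiting a precise structural correspondence between their pairwise error conditions. First I would substitute the received signal \eqref{eq:reciprocity-recieved-signal} into the error inequality displayed immediately before the proposition, absorbing the global phase $e^{\I\theta}$ into the (rotationally symmetric) noise. This rewrites the event $\abs{\falsepilot\transpose\Y\hermitian\falsechannel} > \abs{\truepilot\transpose\Y\hermitian\truechannel}$ as
\begin{equation*}
\abs{\sqrt{\SNR}\,\truepilot\hermitian\falsepilot\,\truechannel\hermitian\falsechannel + \tilde{v}} > \abs{\sqrt{\SNR}\,M\beta + \tilde{u}},
\end{equation*}
with $\tilde{u}=e^{\I\theta}\truepilot\transpose\noise\hermitian\truechannel$ and $\tilde{v}=e^{\I\theta}\falsepilot\transpose\noise\hermitian\falsechannel$, and with the deterministic term $\sqrt{\SNR}M\beta$ arising from $\truechannel\hermitian\truechannel = M\beta$.

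Next I would perform a short second-moment computation on $\noise$, using its iid $\CN{0}{1}$ entries, to show that $(\tilde{u},\tilde{v})$ is jointly zero-mean circularly symmetric complex Gaussian with $\EX\abs{\tilde{u}}^2 = \EX\abs{\tilde{v}}^2 = M\beta$ and cross-covariance $\EX[\tilde{u}\,\tilde{v}\conj] = (\falsepilot\hermitian\truepilot)(\falsechannel\hermitian\truechannel)$. The crucial observation is that this is \emph{exactly} the joint distribution arising in the proof of Proposition~\ref{prop:pep-theta-unknown}, except that every occurrence of $\truechannel\hermitian\falsechannel$---both in the alternative-hypothesis signal leakage and in the noise cross-covariance---is replaced by the single composite scalar $\truepilot\hermitian\falsepilot\,\truechannel\hermitian\falsechannel$. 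Consequently, I would invoke the analysis in Appendix~\ref{app:proof-pep-theta-unknown} verbatim under this substitution, inheriting the monotone dependence of the pairwise error probability on both $\abs{\truepilot\hermitian\falsepilot\,\truechannel\hermitian\falsechannel}$ and $\sqrt{\SNR M \beta}$.

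The SNR monotonicity is essentially a free byproduct: increasing $\sqrt{\SNR M\beta}$ scales only the deterministic signal terms on either side of the inequality while leaving the law of $(\tilde{u},\tilde{v})$ unchanged, so the gap between the correct and incorrect metrics grows. The main obstacle I anticipate is verifying that the factorization described above really does hold---that is, that both the signal leakage and the noise correlation depend on the pilot and channel inner products only through their product $\truepilot\hermitian\falsepilot\,\truechannel\hermitian\falsechannel$. Without this clean factorization, the pairwise error probability would be a function of the two magnitudes $\abs{\truepilot\hermitian\falsepilot}$ and $\abs{\truechannel\hermitian\falsechannel}$ separately, and one would need a more delicate monotonicity argument in two variables rather than the neat reduction to Proposition~\ref{prop:pep-theta-unknown}.
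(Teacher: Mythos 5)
Your proposal is correct and matches the paper's own argument: Appendix~\ref{app:proof-pep-theta-unknown} proves Propositions~\ref{prop:pep-theta-unknown} and~\ref{prop:pep-partial-csi} together by showing that both the signal leakage and the noise cross-covariance depend on the pilots and channels only through the composite scalar $\alpha = \truepilot\hermitian\falsepilot\truechannel\hermitian\falsechannel/(M\beta)$, and then establishing monotonicity of the resulting probability in $\abs{\alpha}$ and $\sqrt{\SNR M \beta}$. The only cosmetic difference is the direction of the reduction: the paper proves the general partial-CSI case and obtains Proposition~\ref{prop:pep-theta-unknown} as the special case $\truepilot=\falsepilot$ for all $n$, whereas you run the substitution the other way, but the factorization you single out as the crucial step is exactly the one the paper exploits.
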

\begin{proof}
	See Appendix~\ref{app:proof-pep-theta-unknown}. 
\end{proof}
Therefore, choosing the mapping $\pilot = s(\channel)$ is of great importance here. From the pairwise error probability we have an elegant metric to evaluate our mapping. At high SNRs, the pairwise error probability for the ``worst'' pair will dominate. Thus, we should find the channel-sequence mapping such that 
\begin{equation}
	\min_{s:\pilot=s(\channel)} \max_{\substack{n,n' \\ n \neq n'}} \abs{\truepilot\hermitian\falsepilot\truechannel\hermitian\falsechannel}  = \min_{s:\pilot=s(\channel)} \metric{\channelmatrix,\pilotmatrix}{U},
\end{equation}
where $\metric{\channelmatrix,\pilotmatrix}{U} = \displaystyle\max_{\substack{n,n' \\ n \neq n'}} \abs{\truepilot\hermitian\falsepilot\truechannel\hermitian\falsechannel}$. In Section~\ref{sec:reciprocity-channel-sequence-algorithm}, we will develop an algorithm that aims to minimize $\metric{\channelmatrix,\pilotmatrix}{U}$.

\subsubsection{Both $\channel$ and $\theta$ are known}\label{sec:reciprocity-full-csi}
In the last case where reciprocity holds, the terminal has full CSI, meaning that it knows both the beam $\channel$ and the phase shift $\theta$, which it acquired from e.g. downlink pilots.

Here, the base station detects the most likely beam from the received signal \eqref{eq:reciprocity-recieved-signal-theta-known}. The maximum likelihood detector is 
\cite[Ch.~6]{kay98detection}
\begin{align}
\argmax_{\mostlikelychannelandpilot} p\left(\Y;\channel\right) &=\argmin_{\mostlikelychannelandpilot}  \norm{\Y - \sqrt{\SNR}\channel\pilot\transpose}\\
&=\argmax_{\mostlikelychannelandpilot}\repart{\pilot\transpose\Y\hermitian\channel}.
\end{align}

If the true channel is $\truechannel$, we make an error in favor of another channel, $\falsechannel$, if 
\begin{align}
\repart{\falsepilot\transpose\Y\hermitian\falsechannel} &> \repart{\truepilot\transpose\Y\hermitian\truechannel} \iff\\
\repart{\sqrt{\SNR}\truepilot\hermitian\falsepilot\truechannel\hermitian\falsechannel+\falsepilot\transpose\noise\hermitian\falsechannel} &> \repart{\sqrt{\SNR}M\beta + \truepilot\transpose\noise\hermitian\truechannel}.
\end{align}
\begin{proposition}\label{prop:pep-full-csi}
	When reciprocity holds, and the terminal knows both the channel, $\channel$, and the phase shift, $\theta$, and the terminal selects the uplink reference sequence based on $\channel$, the pairwise error probability, the probability that channel $\falsechannel$ will be detected given that the true channel is $\truechannel$, is
	\begin{equation}
	\Pr\left( \textnormal{beam }n'\textnormal{ detected} \mid \textnormal{beam }n\textnormal{ correct} \right) = Q\left(\sqrt{\SNR (M\beta-\repart{\truepilot\hermitian\falsepilot\truechannel\hermitian\falsechannel})}\right),
	\end{equation}
	where $Q(\cdot)$ denotes the Q-function.
\end{proposition}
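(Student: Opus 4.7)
The plan is to start from the error event already written out in the paragraph preceding the proposition, substitute the received signal model into it, and reduce the probability to a tail probability of a real Gaussian random variable.

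First I would plug $\Y = \sqrt{\SNR}\truechannel\truepilot\transpose + \noise$ into $\truepilot\transpose\Y\hermitian\truechannel$ and $\falsepilot\transpose\Y\hermitian\falsechannel$, using $\norm{\truepilot}=1$ and $\norm{\truechannel}^2=M\beta$, to confirm the decomposition into a signal part and a noise part that the paper has already listed. Rearranging the error event, I would write it as
\begin{equation}
Z > \sqrt{\SNR}\bigl(M\beta - \repart{\truepilot\hermitian\falsepilot\truechannel\hermitian\falsechannel}\bigr),
\end{equation}
where $Z := \repart{\falsepilot\transpose\noise\hermitian\falsechannel - \truepilot\transpose\noise\hermitian\truechannel}$ is the only random quantity.

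Next I would identify the distribution of $Z$. Writing the complex-valued noise term as a linear combination of the iid $\CN{0}{1}$ entries of $\noise$,
\begin{equation}
W := \falsepilot\transpose\noise\hermitian\falsechannel - \truepilot\transpose\noise\hermitian\truechannel = \sum_{t,m}\bigl(\falsepilot_t\falsechannel_m - \truepilot_t\truechannel_m\bigr)(\noise)_{m,t}\conj,
\end{equation}
shows that $W$ is zero-mean circularly symmetric complex Gaussian with variance $\norm{\falsepilot\falsechannel\transpose-\truepilot\truechannel\transpose}^2$. Expanding this Frobenius norm, using $\norm{\truechannel}^2=\norm{\falsechannel}^2=M\beta$, $\norm{\truepilot}=\norm{\falsepilot}=1$, and the identity $\Tr[(\falsepilot\falsechannel\transpose)\hermitian\truepilot\truechannel\transpose] = (\falsepilot\hermitian\truepilot)(\falsechannel\hermitian\truechannel)$, I would get the variance $2\bigl(M\beta - \repart{\truepilot\hermitian\falsepilot\truechannel\hermitian\falsechannel}\bigr)$. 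Consequently $Z = \repart{W}$ is real Gaussian with mean zero and variance $M\beta - \repart{\truepilot\hermitian\falsepilot\truechannel\hermitian\falsechannel}$.

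Finally, normalizing the tail event by the standard deviation of $Z$ gives
\begin{equation}
\Pr\bigl(Z > \sqrt{\SNR}(M\beta - \repart{\truepilot\hermitian\falsepilot\truechannel\hermitian\falsechannel})\bigr) = Q\!\left(\sqrt{\SNR\bigl(M\beta - \repart{\truepilot\hermitian\falsepilot\truechannel\hermitian\falsechannel}\bigr)}\right),
\end{equation}
which is the claimed expression. The one step that requires genuine care is the variance computation: one must be consistent about conjugation when going from $\noise$ to $\noise\hermitian$ and keep track of the fact that $\Tr[(\falsepilot\falsechannel\transpose)\hermitian\truepilot\truechannel\transpose]$ collapses into the product $(\falsepilot\hermitian\truepilot)(\falsechannel\hermitian\truechannel)$, whose real part is the same as $\repart{\truepilot\hermitian\falsepilot\truechannel\hermitian\falsechannel}$. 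Everything else is routine manipulation of the complex Gaussian distribution and the $Q$-function.
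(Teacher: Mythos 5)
Your proposal is correct and follows essentially the same route as the paper's proof: both reduce the error event to a tail probability of the real part of $\Tr\{\noise\hermitian(\falsechannel\falsepilot\transpose-\truechannel\truepilot\transpose)\}$, compute its variance $2(M\beta-\repart{\truepilot\hermitian\falsepilot\truechannel\hermitian\falsechannel})$ from the Frobenius norm of the rank-one difference, and normalize to obtain the $Q$-function expression. The only cosmetic difference is that you expand the noise term entrywise while the paper keeps it in trace form.
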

\begin{proof}
	See Appendix~\ref{app:proof-pep-theta-known}.
\end{proof}
Here, the pairwise error probability will decrease when $\repart{\truepilot\hermitian\falsepilot\truechannel\hermitian\falsechannel}$ decreases or when $\sqrt{\SNR M \beta}$ increases. Again, as in Section~\ref{sec:reciprocity-only-g}, choosing the mapping $\pilot = s(\channel)$ is of great importance here. We have a similar metric as before to evaluate the mapping. We should find the channel-sequence mapping such that 
\begin{equation}
\min_{s:\pilot=s(\channel)} \max_{\substack{n,n' \\ n \neq n'}} \repart{\truepilot\hermitian\falsepilot\truechannel\hermitian\falsechannel}  = \min_{s:\pilot=s(\channel)} \metric{\channelmatrix,\pilotmatrix}{K},
\end{equation}
where $\metric{\channelmatrix,\pilotmatrix}{K} = \displaystyle\max_{\substack{n,n' \\ n \neq n'}} \repart{\truepilot\hermitian\falsepilot\truechannel\hermitian\falsechannel}$. In Section~\ref{sec:reciprocity-channel-sequence-algorithm}, we will develop an algorithm which aims to minimize $\metric{\channelmatrix,\pilotmatrix}{K}$.

\subsection{Reciprocity Does not Hold}\label{sec:no-reciprocity}
In this case, reciprocity does \emph{not} hold but the terminal knows the downlink channel $\channel$ perfectly. The terminal selects the uplink sequence as a function of the downlink channel and the base station receives $\Y$ as in $\eqref{eq:recievedsignal}$, and detects  $\channel \in \channelset$ with the largest likelihood value. In this case, the knowledge of $\theta$ will not affect the detection problem at the base station, however, it will affect the downlink data rate since the transmission would need to use additional demodulation pilots to find the phase shift or transmit with non-coherent techniques.

Due to the lack of reciprocity, the uplink channel $\uplinkchannel$ is random and independent of the downlink channel $\channel$. Assuming that $\uplinkchannel$ has probability density function $p(\uplinkchannel)$, which is known at the base station, the detector detects the sequence with the largest likelihood value, which corresponds to the downlink channel in $\channelset$, from the received signal
\begin{align}
	\argmax_{\mostlikelypilot} p\left(\Y|\pilot\right) &= \argmax_{\mostlikelypilot} 	\int d\uplinkchannel\; p(\uplinkchannel)p\left( \Y | \pilot,\uplinkchannel \right). \label{eq:marginalized-likelihood-no-rec}
\end{align}
Here, we assume that we have a structure in the downlink channel corresponding to the finite set $\channelset$. However, the structure (if it exists) in the uplink channel is not fully known. Hence, we assume that the uplink channel is continuous. We study \eqref{eq:marginalized-likelihood-no-rec} in three cases: 
\begin{enumerate}
	\item The uplink channel is independent and identically distributed (i.i.d.) Rayleigh fading, $\uplinkchannel = \sqrt{\beta}\bm{z}$ where $\bm{z} \sim \CN{\bm{0}}{\bm{I}}$. The conditional probability in \eqref{eq:marginalized-likelihood-no-rec} is written as
	\begin{align}
		p\left( \Y|\pilot \right) &= \int d\uplinkchannel\; p(\uplinkchannel)p\left( \Y | \pilot,\uplinkchannel \right) \\
		&= \int d\bm{z}\; p(\bm{z})p\left( \Y | \pilot,\bm{z} \right) \\
		&= \frac{1}{(2\pi)^M}\frac{1}{(2\pi)^{M\tau}} \int d\bm{z}\; e^{-\norm{\bm{z}}^2}e^{-\norm{\Y-\sqrt{\SNR\beta}\bm{z}\pilot\transpose}^2}. \label{eq:sum-exponents-rayleigh}
	\end{align}
	The sum of the two negative exponents in \eqref{eq:sum-exponents-rayleigh} can be written 
	\begin{align}
		&\norm{\bm{z}}^2 + \norm{\Y}^2 + \SNR\beta\norm{\bm{z}}^2-2\sqrt{\SNR\beta}\repart{\pilot\transpose\Y\hermitian\bm{z}} \\
		&=\norm{\sqrt{\SNR\beta+1}\bm{z}-\sqrt{\frac{\SNR\beta}{\SNR\beta+1}}\Y\pilot\conj}^2+\norm{\Y}^2-\frac{\SNR\beta}{\SNR\beta+1}\norm{\Y\pilot\conj}^2. \label{eq:sum-exponents-rayleigh-2}
	\end{align}
	The exponential of the first term in \eqref{eq:sum-exponents-rayleigh-2} integrates to a constant and the second term is independent of the sequence $\pilot$. Hence, maximization of $p(\Y|\pilot)$ is equivalent to maximization of $\norm{\Y\pilot\conj}^2$,
	\begin{equation}
		\argmax_{\mostlikelypilot} p\left(\Y|\pilot\right) = \argmax_{\mostlikelypilot} \norm{\Y\pilot\conj}^2. \label{eq:no-reciprocity-rayleigh}
	\end{equation}
	If the true sequence is $\truepilot$, we make an error in favor of another sequence, $\falsepilot$, if
	\begin{align}
		\norm{\Y\falsepilot\conj} &> \norm{\Y\truepilot\conj} \iff \\
		\norm{\sqrt{\SNR}\uplinkchannel \truepilot\transpose\falsepilot\conj + \noise\falsepilot\conj}^2 &> \norm{\sqrt{\SNR}\uplinkchannel  + \noise\truepilot\conj}^2.
	\end{align}
	We let $\alpha = \truepilot\hermitian\falsepilot$ and $\bm{U}$ be a unitary matrix where the first two rows are the result of Gram-Schmidt orthogonalization of $\truepilot$ and $\falsepilot$. The first row is $\bm{u}_1\transpose = \truepilot\transpose$ and the second row is $\bm{u}_2\transpose = \frac{\falsepilot\transpose-\alpha\truepilot\transpose}{\sqrt{1-\abs{\alpha}^2}}$. The other rows are orthogonal to both $\truepilot\transpose$ and $\falsepilot\transpose$. Since the elements in $\noise$ are i.i.d. $\CN{0}{1}$, the elements of $\noise \bm{U}$ are also i.i.d. $\CN{0}{1}$. Therefore we can write the pairwise error probability as	
	\begin{align}
	&\Pr\left( \norm{\sqrt{\SNR}\uplinkchannel\truepilot\transpose\falsepilot\conj + \noise\falsepilot\conj}^2 > \norm{\sqrt{\SNR}\uplinkchannel +\noise\truepilot\conj}^2 \right) \\
	&=\Pr\left( \norm{\sqrt{\SNR}\alpha\conj\uplinkchannel + \noise\bm{U}\falsepilot\conj}^2 > \norm{\sqrt{\SNR}\uplinkchannel +\noise\bm{U}\truepilot\conj}^2 \right) \\
	&= \Pr\left( \norm{\sqrt{\SNR}\alpha\conj\uplinkchannel + \noise\left( \alpha\conj\bm{e}_1 + \sqrt{1-\abs{\alpha}^2}\bm{e}_2 \right)}^2 > \norm{\sqrt{\SNR}\uplinkchannel +\noise\bm{e}_1}^2 \right) \\
	&\overset{(a)}{=} \Pr\left( \norm{ \abs{\alpha}\bm{x} + \sqrt{1-\abs{\alpha}^2}\bm{y} }^2 > \norm{\bm{x}}^2  \right) \label{eq:no-rec-pep}
	\end{align}
	where $\bm{e}_1$ and $\bm{e}_2$ are the first and second column of $\bm{I}_{\tau}$, respectively, and in (a) we introduced $\bm{x} = \sqrt{\SNR}\uplinkchannel+\noise\bm{e}_1 \sim \CN{\bm{0}}{(\SNR\beta+1)\bm{I}_M}$ and $\bm{y} = \noise\bm{e}_2 \sim \CN{\bm{0}}{\bm{I}_M}$. Note that $\bm{x}$ and $\bm{y}$ are independent. 
	The pairwise error probability \eqref{eq:no-rec-pep} will increase when $\abs{\alpha}$ is increasing, see Appendix~\ref{app:proof3} for a proof.
	Therefore, here we want to use sequences that do not have high correlation.
	In this case, the optimal sequence mapping is independent of the set of downlink channels, $\channelset$.
	The set of sequences that is optimal fulfills
	\begin{equation}
		\min_{s:\pilot=s(\channel)} \max_{\substack{n,n' \\ n \neq n'}} \abs{\truepilot\hermitian\falsepilot} = \min_{s:\pilot=s(\channel)} \metric{\pilotmatrix}{NR},
	\end{equation}
	where $\metric{\pilotmatrix}{NR} = \displaystyle \max_{\substack{n,n' \\ n \neq n'}} \abs{\truepilot\hermitian\falsepilot}$.
	\item The uplink channel is a uniform linear array with $\lambda/2$-spacing line-of-sight channel, where $\lambda$ is the wavelength, $\uplinkchannel =  \sqrt{\beta}e^{\I\xi}\,\left[\begin{smallmatrix}
	e^0& e^{-\I\pi\sin\psi}& \dots& e^{-\I\pi(M-1)\sin\psi} \end{smallmatrix}\right]\transpose$, which is parameterized by $\psi$, the impinging angle, $p(\psi) = \frac{1}{\pi}, \frac{-\pi}{2} < \psi \leq \frac{\pi}{2}$, and $\xi$, the phase shift stemming from the unknown distance to the terminal $p(\xi) = \frac{1}{2\pi}, -\pi < \xi \leq \pi.$
	The sequence with the largest likelihood value is
	\begin{align}
	\argmax_{\mostlikelypilot} p(\Y|\pilot) &= \argmax_{\mostlikelypilot} \int_{-\pi}^{\pi}d\xi\,\int_{-\frac{\pi}{2}}^{\frac{\pi}{2}}d\psi\, p(\Y|\pilot,\psi,\xi)p(\psi)p(\xi) \\
	&= \argmax_{\mostlikelypilot}  \int_{-\pi}^{\pi}d\xi\,\int_{-\frac{\pi}{2}}^{\frac{\pi}{2}}d\psi\, \frac{1}{2\pi}\frac{1}{\pi}\frac{1}{\pi^{M\tau}}e^{-\norm{\Y-\sqrt{\SNR}\uplinkchannel\pilot\transpose}^2} \\
	&= \argmax_{\mostlikelypilot} \int_{-\pi}^{\pi}d\xi\,\int_{-\frac{\pi}{2}}^{\frac{\pi}{2}}d\psi\,e^{ 2 \sqrt{\SNR}\repart{\uplinkchannel\hermitian\Y\pilot\conj}}.
	\end{align}
	This detector is computationally very complex since it requires the evaluation of $N$ double integrals which cannot be found in closed form.
	A simpler detector is to find the maximum-likelihood estimate of $\xi$,
	\begin{align}
		\hat{\xi} &= \argmax_{\xi} p(\Y|\pilot,\psi,\xi) \\
		&= -\arg \left[\begin{smallmatrix}
		e^0& e^{-\I\pi\sin\psi}& \dots& e^{-\I\pi(M-1)\sin\psi} \end{smallmatrix}\right]\conj\Y\pilot\conj, 
	\end{align}
	which gives us a maximization of the concentrated likelihood function
	\begin{align}
		\argmax_{\mostlikelypilot} p(\Y|\pilot,\hat{\xi}) 		&= \argmax_{\mostlikelypilot} \int_{-\frac{\pi}{2}}^{\frac{\pi}{2}}d\psi\,e^{ 2 \sqrt{\SNR}\abs{\uplinkchannel\hermitian\Y\pilot\conj}}. \label{eq:no-reciprocity-ml-phase}
	\end{align}
	Further, we can concentrate the likelihood function with respect to $\psi$, corresponding to a joint maximization of $\xi$, $\psi$ and the sequence $\pilot$
	\begin{equation}
		\argmax_{\mostlikelypilot} \max_{\psi} \abs{\uplinkchannel\hermitian\Y\pilot\conj}.
	\end{equation}
	\item The uplink channel is line-of-sight but the uplink signal has been phase shifted such that the first antenna element in the base station has a real-valued channel to the terminal, $\uplinkchannel =  \sqrt{\beta}\left[\begin{smallmatrix}
	e^0& e^{-\I\pi\sin\psi}& \dots& e^{-\I\pi(M-1)\sin\psi} \end{smallmatrix}\right]\transpose$, $p(\psi) = \frac{1}{\pi}, \frac{-\pi}{2} < \psi \leq \frac{\pi}{2}$.
	The detector in this case is 
	\begin{align}
	\argmax_{\mostlikelypilot} p(\Y|\pilot) &= \argmax_{\mostlikelypilot} \int_{-\frac{\pi}{2}}^{\frac{\pi}{2}}d\psi\, p(\Y|\pilot,\psi)p(\psi) \\
	&= \argmax_{\mostlikelypilot} \int_{-\frac{\pi}{2}}^{\frac{\pi}{2}}d\psi\,e^{ 2 \sqrt{\SNR}\repart{\uplinkchannel\hermitian\Y\pilot\conj}}. \label{eq:no-reciprocity-los-without-phase-shift}
	\end{align}
\end{enumerate}
The detector can be extended to encompass different uplink channel models than the ones listed here.

\subsection{Reciprocity Does not Hold, but the Array is Calibrated}\label{sec:calibrated-array}
In the final scenario, the terminal knows the downlink channel. Reciprocity does not hold so $\uplinkchannel$ is independent of $\channel$. However, the array is calibrated so that the uplink channel is also part of a channel set, $\uplinkchannel \in \uplinkchannelset$, i.e., we know the structure of the uplink channel as well.
The terminal transmits an uplink sequence based on the downlink channel and the base station detects the sequence with largest likelihood value
\begin{align}
	\argmax_{\mostlikelypilot} p(\Y|\pilot) &= \argmax_{\mostlikelypilot} \frac{1}{\abs{\uplinkchannelset}}\sum_{\uplinkchannel\in\uplinkchannelset} p(\Y|\pilot,\uplinkchannel) \\
	&= \argmax_{\mostlikelypilot} \sum_{\uplinkchannel\in\uplinkchannelset}\frac{1}{\pi^{M\tau}}e^{-\norm{\Y-\sqrt{\SNR}\uplinkchannel\pilot\transpose}^2} \\
	&= \argmax_{\mostlikelypilot} \sum_{\uplinkchannel\in\uplinkchannelset}e^{2\sqrt{\SNR}\repart{\pilot\transpose\Y\hermitian\uplinkchannel}}.
\end{align}
Additionally, the uplink channel could come from a channel set but also be phase shifted, $\uplinkchannel = e^{\I \nu}\uplinkchannel^{\text{beam}}, \uplinkchannel^\text{beam}\in \uplinkchannelset, -\pi\leq \nu\leq\pi$. Note that this is not the same as $\theta$ being known or unknown.
In this case, the sequence with the largest likelihood value is obtained from
\begin{align}
	\argmax_{\mostlikelypilot} p(\Y|\pilot) &=  \argmax_{\mostlikelypilot} \frac{1}{\abs{\uplinkchannelset}}\sum_{\uplinkchannel^\text{beam}\in\uplinkchannelset}\int_{-\pi}^{\pi}d\nu\frac{1}{2\pi}\,p(\Y|\pilot,\uplinkchannel^\text{beam},\nu)\\
	&=\argmax_{\mostlikelypilot} \sum_{\uplinkchannel^\text{beam}\in\uplinkchannelset} \int_{-\pi}^{\pi} d\nu \, e^{-\norm{\Y-\sqrt{\SNR}e^{\I\nu}\uplinkchannel^\text{beam}\pilot\transpose}^2} \\
	&=\argmax_{\mostlikelypilot} \sum_{\uplinkchannel^\text{beam}\in\uplinkchannelset} \int_{-\pi}^{\pi} d\nu\,e^{2\sqrt{\SNR}\abs{\pilot\transpose\Y\hermitian\uplinkchannel^\text{beam}}\cos\nu} \\
	&=\argmax_{\mostlikelypilot}\sum_{\uplinkchannel^\text{beam}\in\uplinkchannelset} I_{0}\left(2\sqrt{\SNR}\abs{\pilot\transpose\Y\hermitian\uplinkchannel^\text{beam}}\right),
\end{align}
where 
\begin{equation}
I_0(z) = \frac{1}{\pi}\int_{0}^{\pi}e^{z\cos\theta}d\theta
\end{equation}
is the modified Bessel function of the first kind.

\section{Design of Sequence Mappings}\label{sec:reciprocity-channel-sequence-algorithm}
When the terminal has knowledge about the beam that it is in, the sequence mapping that maps the channel beam to the uplink reference signal is crucial for the detection performance.
In Section~\ref{sec:dl-channel-detection}, we have seen the minimization of three different metrics appear in order to minimize the pairwise error probability:
\begin{enumerate}
	\item $	\displaystyle\min_{s:\pilot=s(\channel)} \max_{\substack{n,n' \\ n \neq n'}}\abs{\truepilot\hermitian\falsepilot\truechannel\hermitian\falsechannel} = \min_{s:\pilot=s(\channel)}  \metric{\channelmatrix,\pilotmatrix}{U}$, when reciprocity holds and $\theta$ is unknown,
	\item $\displaystyle	\min_{s:\pilot=s(\channel)} \max_{\substack{n,n' \\ n \neq n'}}\repart{\truepilot\hermitian\falsepilot\truechannel\hermitian\falsechannel} = \min_{s:\pilot=s(\channel)}  \metric{\channelmatrix,\pilotmatrix}{K}$, when reciprocity holds and $\theta$ is known,
	\item $\displaystyle	\min_{s:\pilot=s(\channel)} \max_{\substack{n,n' \\ n \neq n'}}\abs{\truepilot\hermitian\falsepilot} = \min_{s:\pilot=s(\channel)} \metric{\pilotmatrix}{NR}$, when reciprocity does not hold and the uplink channel is Rayleigh fading. This metric minimization problem is the Grassmannian line packing problem \cite{Love08}.
\end{enumerate}
In the cases where reciprocity holds, the structure of $\channelset$ will affect which sequences that are optimal. If all channel pairs have the same inner product, i.e., $\truechannel\hermitian\falsechannel$ is the same for all $n$ and $n'$, the optimal sequence pairs will also have the same inner product. On the other hand, if there is some structure in $\channelset$ such that, for instance, a pair is orthogonal, $\truechannel\hermitian\falsechannel=0$, we can ``get away'' with using the same pilot for the pair, i.e., $\truepilot=\falsepilot$, if we look at this pair in isolation. Structures with more diverse inner products between the channel pairs will be beneficial for our methods. 

In practice, a small pairwise error probability translates into a small mean-square error. For the contrary to be true, certain restrictions on the beams in $\channelset$ should be made. Our analysis holds for any $\channelset$ where $\norm{\channel}^2 = M\beta$, but additional constraints on $\channelset$ can be made, e.g., $\norm{\channel_n - \channel_{n'}}^2 > \varepsilon$, $n\neq n'$, i.e., that the norm-square between two channels in the set should be greater than a threshold $\varepsilon$.

Finding the optimal sequence mapping is an NP-hard problem and the sequences are highly coupled---changing one sequence affects the performance of every pair of sequences which includes the changed sequence. These two properties make it very difficult to find the optimal way of designing sequences with a small metric. 
We propose an algorithm which randomly draws sequences with a certain distribution, that can depend on the set of beams, multiple times and keeps the sequence mapping that so far had the smallest metric. The algorithm is presented in Algorithm~\ref{alg:generate-sequences} and we refer to this algorithm as the Sequence Generation Algorithm (\SequenceGeneratingAlgorithm). 

\begin{algorithm}[tbp]
	\caption{Sequence Generation Algorithm (\SequenceGeneratingAlgorithm), takes the distribution from which to draw sequences and the metric to be minimized as input, and gives the best sequence found as output }
	\label{alg:generate-sequences}	
	\begin{algorithmic}[1]
		\REQUIRE $p(\cdot)$, $\metric{\cdot,\channelmatrix}{}$
		\ENSURE $\pilotmatrix$
		\FOR{ a predefined number of iterations}
		\STATE Create $\newpilotmatrix \sim p(\cdot)$. 
		\STATE Normalize the columns of $\newpilotmatrix$ such that each column $\bm{v}$ has $\norm{\bm{v}} = 1$.
		\IF {$\metric{\newpilotmatrix,\channelmatrix}{} < \metric{\pilotmatrix,\channelmatrix}{}$ }
		\STATE $\pilotmatrix = \newpilotmatrix$
		\ENDIF
		\ENDFOR
	\end{algorithmic}
\end{algorithm}

\begin{algorithm}[tbp]
	\caption{Unknown Phase-Shift Correlation Algorithm (\UnknownAlgorithm), finds a correlation matrix for minimizing the metric $ \metric{\cdot,\cdot}{U}$ }
	\label{alg:correlation-unknown}	
	\begin{algorithmic}[1]
		\REQUIRE $\channelmatrix$
		\ENSURE $\bm{R}$
		\STATE Calculate $\bm{R}^G = \channelmatrix\hermitian\channelmatrix/M$ 
		\STATE $m = \displaystyle\min_{i,j, \abs{\bm{R}^G_{i,j}}\neq 0}  \abs{\bm{R}^G_{i,j}}$
		\STATE Create $\bm{R}^P_H = 1 \oslash (\abs{\bm{R}^G}m)$. Division by zero is treated as $1$. 
		\STATE Let $\bm{R}$ be the positive semidefinite matrix which is closest to $\bm{R}^P_H$. \[ \bm{R} = \argmin_{\bm{R}\succcurlyeq\bm{0}} \norm{ \bm{R}^P_H - \bm{R} } \]
	\end{algorithmic}
\end{algorithm}

\begin{algorithm}[tbp]
	\caption{Known Phase-Shift Correlation Algorithm (\KnownAlgorithm), finds a correlation matrix for minimizing the metric $ \metric{\cdot,\cdot}{K}$  }
	\label{alg:correlation-known}	
	\begin{algorithmic}[1]
		\REQUIRE $\channelmatrix$
		\ENSURE $\bm{R}$
		\STATE Calculate $\bm{R}^G = \channelmatrix\hermitian\channelmatrix/M$ 
		\STATE $m_R = \displaystyle\min_{i,j, \repart{\bm{R}^G_{i,j}}\neq 0} | \repart{\bm{R}^G_{i,j}}|$
		\STATE $m_I = \displaystyle\min_{i,j, \impart{\bm{R}^G_{i,j}}\neq 0} | \impart{\bm{R}^G_{i,j}} |$
		\STATE Create $\bm{R}^P_H = 1 \oslash (\repart{\bm{R}^G}m_R) +  \I  \oslash (\impart{{\bm{R}^G}}m_I)$. Division by zero is treated as $1$ in the real part and $0$ in the imaginary part. This way $\bm{R}^P$ will be Hermitian.
		\STATE Let $\bm{R}$ be the positive semidefinite matrix which is closest to $\bm{R}^P_H$. \[ \bm{R} = \argmin_{\bm{R}\succcurlyeq\bm{0}} \norm{ \bm{R}^P_H - \bm{R} } \]
	\end{algorithmic}
\end{algorithm}

The optimal distribution to use in the algorithm is unknown, therefore, as a baseline, we draw sequences from the $\CN{\bm{0}}{\bm{I}}$ distribution.
In an effort of finding better sequences we have experimented with introducing a correlation of the sequences, i.e., drawing sequences from the distribution $\CN{\bm{0}}{\bm{R}}$, where $\bm{R}$ is the correlation matrix. Two algorithms for finding such $\bm{R}$:s can be found in Algorithms~\ref{alg:correlation-unknown} and~\ref{alg:correlation-known} for the case where the phase shift is unknown and known, respectively. We refer to these algorithms as the Unknown Phase-Shift Correlation Algorithm (\UnknownAlgorithm) and the Known Phase-Shift Correlation Algorithm (\KnownAlgorithm), respectively. Note, that these algorithms are heuristic and better performance than the baseline $\bm{R}=\bm{I}$ cannot be guaranteed. However, from simulations we have seen that especially in the case where the grid-of-beams is based on a DFT grid-of-beams world, see Section~\ref{sec:results} for simulation setup and results, these algorithms perform well, in the sense that with the same number of iterations the metrics that we found with $\CN{\bm{0}}{\bm{R}}$ are generally lower than the ones with $\CN{\bm{0}}{\bm{I}}$.
While our proposed algorithms perform very well, they are not provably optimal and we have to leave the search for improved schemes for future work.

\section{Numerical Results}\label{sec:results}

In this section we evaluate the performance of the detectors and the sequence mappings. We evaluate the performance based on the probability of incorrect detection, $p_\text{E}$, and mean-square error (MSE) between the true channel, and detected beam. Throughout all simulations, we assume $M=10$ antennas at the base station, $N=70$ vectors in $\channelset$, and $\beta=1$ such that $\norm{\channel}^2 = M\beta = 10$ for $\channel\in\channelset$. Whenever we generate sequences with the \SequenceGeneratingAlgorithm, Algorithm~\ref{alg:generate-sequences}, we run for $10^6$ iterations and save the sequences with the lowest metric value. Running the algorithm for more iterations will have a higher probability of generating better sequence mappings. However, the time between two improvements will increase.

First, we introduce the discrete Fourier transform (DFT) grid-of-beams world \cite{Shen16}. This corresponds to the terminal being in line-of-sight, but with a finite number of angles, to a base station with a uniform linear array. The beams in $\channelset$ are 
\begin{equation}
\channel_n = \sqrt{\beta}\begin{bmatrix}
0, & \dots, & e^{2 \pi \I (m-1) \frac{n-1}{N} }, & \dots, & e^{2 \pi \I (M-1) \frac{n-1}{N} }
\end{bmatrix}\transpose, \label{eq:dft-beam}
\end{equation}
for $n = 1,\dots,N$.

In the DFT beam case, the correlation between two beams depends on the ``beam-distance'', $n'-n$, and the ratio, $\frac{N}{M}$, 
\begin{align}
\truechannel\hermitian\falsechannel &=  \beta\begin{bmatrix}
0, \dots  e^{2 \pi \I (m-1) \frac{n-1}{N} }, \dots, e^{2 \pi \I (M-1) \frac{n-1}{N} }
\end{bmatrix}\hermitian\nonumber\\ & \phantom{=}\times\begin{bmatrix}
0, \dots, e^{2 \pi \I (m-1) \frac{n'-1}{N} }, \dots, e^{2 \pi \I (M-1) \frac{n'-1}{N} }
\end{bmatrix} \\
&=\beta\sum_{m=1}^{M}e^{2\pi \I (m-1)\frac{n'-n}{N}}.
\end{align}
Clearly, if $\frac{N}{M}$ is an integer and $n'-n \bmod \frac{N}{M} = 0$, then, for $n'\neq n$,
\begin{equation}
\truechannel\hermitian\falsechannel = \beta\sum_{m=1}^{M}e^{2\pi \I \frac{m-1}{M}} = 0.
\end{equation}
We know that $\metric{\pilotmatrix,\channelmatrix}{U} \geq 0$.
This can be used to determine the sequences to be used for CSI feedback. Hence, if $N > M$, $N/M$ is an integer, and $N > \tau$, we only need $\tau = N/M$ to make 
\begin{equation}
\metric{\pilotmatrix,\channelmatrix}{U}  = 0,
\end{equation} which is optimal in this case, by simply having $N/M$ orthogonal sequences and assigning sequence $\pilot_{(n\bmod\frac{N}{M})+1}$ to $\channel_n$.

We start by studying the cases where reciprocity holds.
In Fig.~\ref{fig:op-on-dft-nps} we show the result of a simulation where the terminal lies in one of the $N$ beams of the DFT grid. The phase shift is known by the terminal and compensated for. The terminal detects the beam and transmits one of $\tau$ orthogonal sequences based on the mapping
\begin{equation}
\pilot_{(n\bmod \tau) +1} = s(\channel_n). \label{eq:orthogonal-pilot-mapping}
\end{equation}
``No CSI'' corresponds to the case where the terminal has no CSI at all and therefore sends the same sequence all the time, i.e., the benchmark scenario considered in Section~\ref{sec:reciprocity-no-csi}. The specifics of the transmitted sequence does not influence performance. It can be any (one) sequence.
We can see that $\tau~=~2$ gives a better performance than $\tau~=~3$ at high SNRs. This comes from the fact that beam 1 and beam 70 will have the same pilot in the case with three pilots and since the two beams are ``close'', in the sense that they have a large correlation, they will often be mistaken for each other. 

In Fig.~\ref{fig:op-on-dft-ps} we show results from the same type of simulation, but in this case the phase shift $\theta$ is unknown by the terminal. Now, the simulations correspond to the case in Section~\ref{sec:reciprocity-only-g}, except for ``No CSI'' which still corresponds to the scenario without CSI in Section~\ref{sec:reciprocity-no-csi}. By comparing the results in Figs.~\ref{fig:op-on-dft-nps} and \ref{fig:op-on-dft-ps} we see that there is a significant degradation in performance of not knowing the phase, especially in the cases where the sequence-channel pairs are bad (in the sense that the corresponding metric is large).

\begin{figure*}
	\centering
	\includegraphics{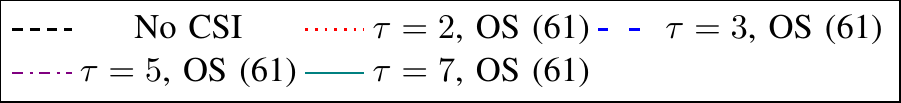} \vspace{1mm}\\
	\subfloat[phase shift $\theta$ known]{
	\centering
	\includegraphics{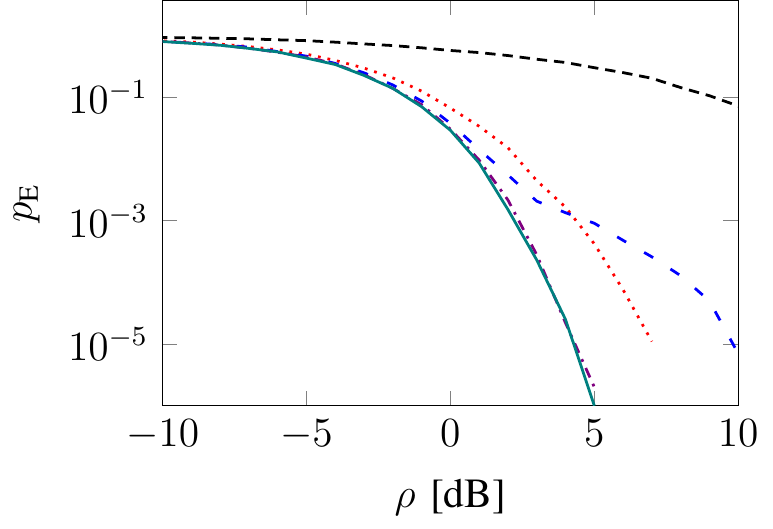}
	\label{fig:op-on-dft-nps}
	}
	\subfloat[phase shift $\theta$ unknown]{
	\centering
	\includegraphics{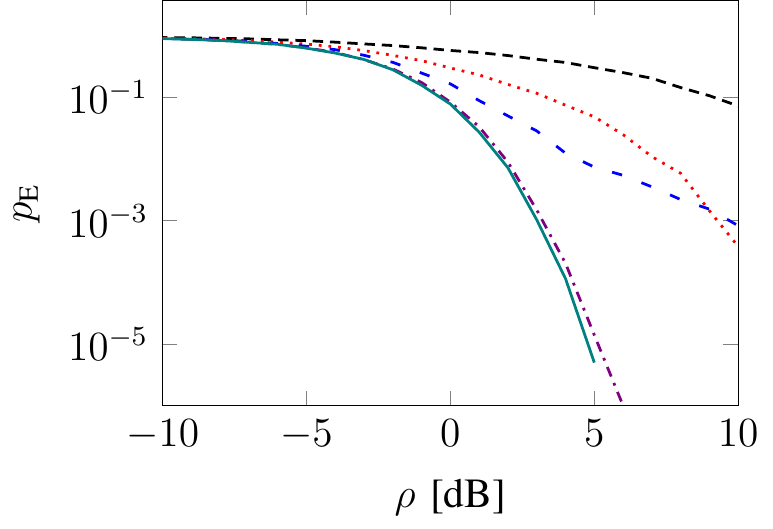}
	\label{fig:op-on-dft-ps}
	}
	\caption{Probability of incorrect detection in a DFT grid-of-beams world, \eqref{eq:dft-beam}, where orthogonal sequences (OS) of length $\tau$ are assigned through mapping \eqref{eq:orthogonal-pilot-mapping}. The terminal is located on the grid. In Fig. \ref{fig:op-on-dft-nps} the terminal knows the phase shift while in Fig. \ref{fig:op-on-dft-ps} it does not. However, in both figures the line titled ``No CSI'' corresponds to the case where the terminal has no channel knowledge, both $\channel$ and $\theta$ are unknown. }
	\label{fig:op-on-dft}
\end{figure*}

In Figs.~\ref{fig:a-on-dft-nps} and \ref{fig:a-on-dft-ps}, we have included sequence mappings generated from the \SequenceGeneratingAlgorithm, Algorithm~\ref{alg:generate-sequences}, in the scenarios with known and unknown phase shifts, respectively.
The sequences generated by the algorithm have length $\tau = 3$. In the sequences corresponding to ``White algorithm'', we used $\CN{\bm{0}}{\bm{I}}$ to draw the sequences and in ``Correlated algorithm'' we used  $\CN{\bm{0}}{\bm{R}}$ to draw the sequences, where $\bm{R}$ is chosen as a function of $\channelmatrix$ with the \KnownAlgorithm, Algorithm~\ref{alg:correlation-known}, and the \UnknownAlgorithm, Algorithm~\ref{alg:correlation-unknown}, respectively. The best sequence mapping, with respect to $\metric{\channelmatrix,\pilotmatrix}{K}$ and $\metric{\channelmatrix,\pilotmatrix}{U}$, respectively, was saved and used for the simulations.
We see that carefully selecting the sequence mapping will hugely impact the performance of the detector; the error probability is lower when using the proposed algorithms than when using orthogonal sequences with $\tau=3$. 
The metric values for the simulations can be found in Table~\ref{tab:metric-values} where we indeed can confirm that cases with lower metric values perform better in terms of detection probability. However, the metric will first come in play at high SNRs when the worst pair dominates the performance. Recall that $\SNR$ is the SNR after coherent integration over the sequence, meaning that the sequence length will not affect the performance other than possibly lowering the metric. Note that, even though the sequences are designed to minimize the pairwise error probability, the performance translates to a small MSE which is a more relevant metric for performance of practical systems. 

\begin{table}
	\centering
	\caption{Metric values, rounded to two decimal points, corresponding to the simulations in Fig.~\ref{fig:a-on-dft}, where the left column, $\metric{\channelmatrix,\pilotmatrix}{K}$, corresponds to Fig.~\ref{fig:a-on-dft-nps} and the right column, $\metric{\channelmatrix,\pilotmatrix}{U}$, corresponds to Fig.~\ref{fig:a-on-dft-ps}.}
	\label{tab:metric-values}
	\begin{tabular}{|l|c|c|}
		\hline
		& $\metric{\channelmatrix,\pilotmatrix}{K}$ & $\metric{\channelmatrix,\pilotmatrix}{U}$ \\ \hline
		\includegraphics{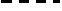} No CSI & - & 9.67 \\ \hline
		\includegraphics{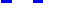} $\tau=3$, OS \eqref{eq:orthogonal-pilot-mapping} & 8.89 & 9.67 \\ \hline
		\includegraphics{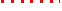} $\tau=3$, White algorithm & 5.41 & 7.73 \\ \hline
		\includegraphics{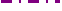} $\tau=3$, Correlated algorithm & 4.99 & 7.11 \\ \hline
		\includegraphics{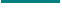} $\tau=7$, OS \eqref{eq:orthogonal-pilot-mapping} & 0 & 0 \\ \hline
	\end{tabular}
\end{table}

\begin{figure*}
	\centering
	\includegraphics{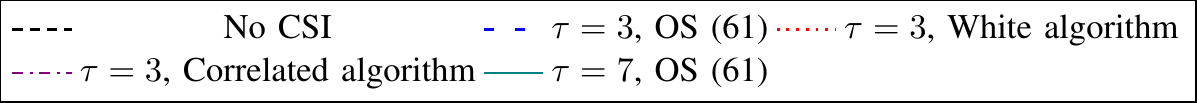} \vspace{1mm}\\
	\subfloat[phase shift $\theta$ known]{
		\centering
		\includegraphics{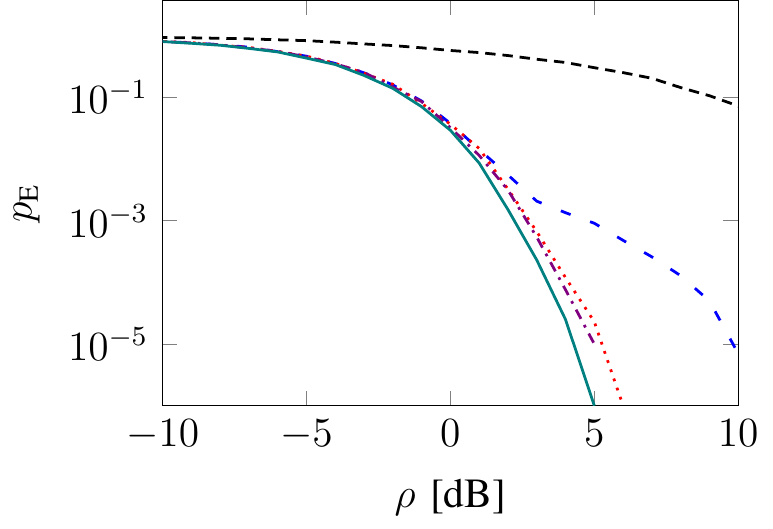}
		\label{fig:a-on-dft-nps}
	}
	\subfloat[phase shift $\theta$ unknown]{
		\centering
		\includegraphics{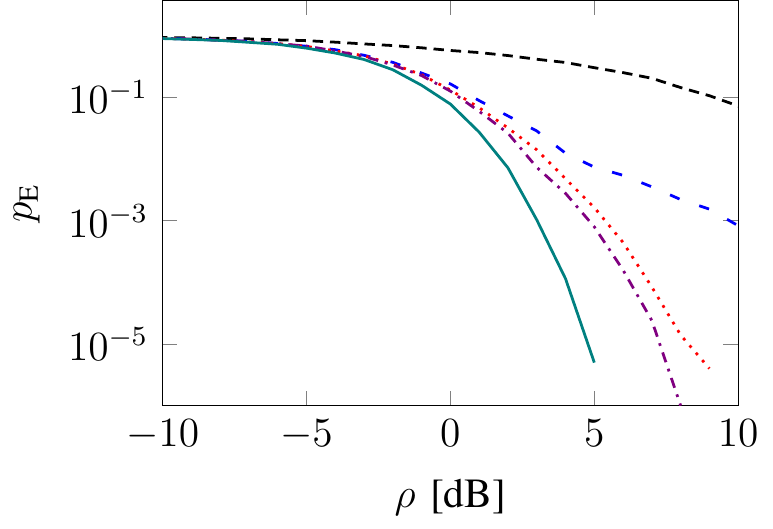}
		\label{fig:a-on-dft-ps}
	}
	\caption{Probability of incorrect detection in a DFT grid-of-beams world, \eqref{eq:dft-beam}, with orthogonal sequences (OS) of length $\tau$ that are assigned through mapping \eqref{eq:orthogonal-pilot-mapping} and sequences generated with the \SequenceGeneratingAlgorithm, Algorithm~\ref{alg:generate-sequences}. with $\CN{\bm{0}}{\bm{I}}$ (White algorithm) and $\CN{\bm{0}}{\bm{R}}$ (Correlated algorithm) where $\bm{R}$ is generated with the \KnownAlgorithm, Algorithm~\ref{alg:correlation-known}, in Fig.~\ref{fig:a-on-dft-nps}, and the \UnknownAlgorithm, Algorithm~\ref{alg:correlation-unknown}, in Fig.~\ref{fig:a-on-dft-ps}. The terminal is located on the grid. In Fig.~\ref{fig:a-on-dft-nps} the terminal knows the phase shift while in Fig.~\ref{fig:a-on-dft-ps} it does not. However, in both figures the line titled ``No CSI'' corresponds to the case where the terminal has no channel knowledge, both $\channel$ and $\theta$ are unknown. }
	\label{fig:a-on-dft}
\end{figure*}

Although we assume that the true downlink channel is from a finite set of beams we can extend the results to a ``continuous beamspace'' by quantizing the true channel to the beam in the set that is closest to the true channel, in the mean-square sense. In Fig.~\ref{fig:a-off-dft} we show the results from a simulation where the true channel is a line-of-sight channel (with an arbitrary angle-of-arrival). In the cases where the terminal knows the channel, it finds the beam in the set that is closest to the true channel and transmits the sequence corresponding to that beam. Other than that, the simulation setup is the same as in Fig.~\ref{fig:a-on-dft}, and the same sequences are used for the beams.
We can again see that carefully selecting the sequence mapping improves the performance. Here, the MSE is lower when using sequences generated from the proposed algorithms than when using orthogonal sequences, while using the same sequence length, $\tau=3$. 
The MSE levels out at high SNRs where the only mismatch that remains is between the true channel and the quantized channel.

\begin{figure*}
	\centering
	\includegraphics{leg-a-off-dft} \vspace{1mm}\\
	\subfloat[phase shift $\theta$ known]{
		\centering
		\includegraphics{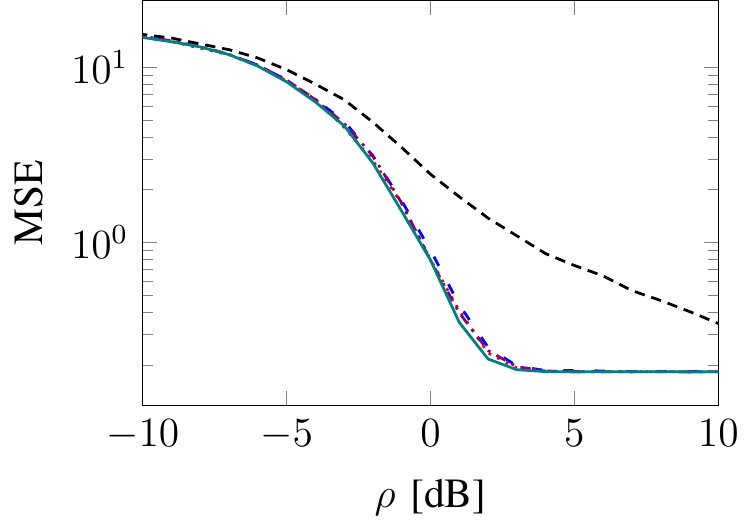}
		\label{fig:a-off-dft-nps}
	}
	\subfloat[phase shift $\theta$ unknown]{
		\centering
		\includegraphics{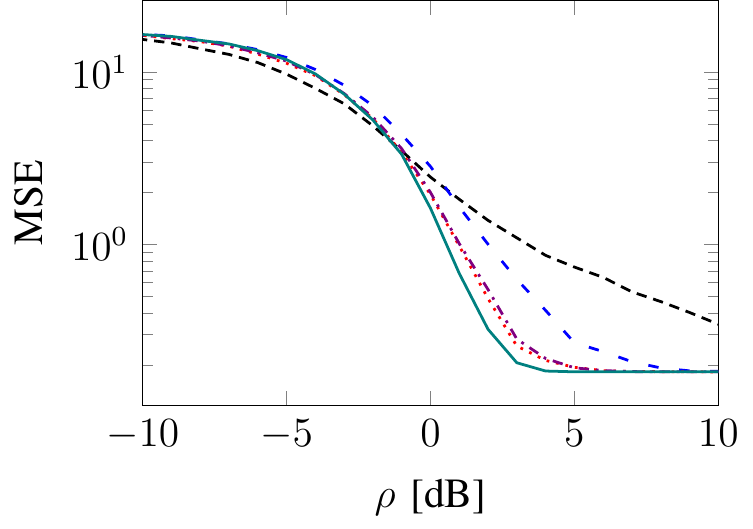}
		\label{fig:a-off-dft-ps}
	}
	\caption{MSE between the true channel and the detected channel. The true channel is a line-of-sight channel (with an arbitrary angle-of-arrival) to the base station. The terminal selects the channel from a DFT grid-of-beams world, \eqref{eq:dft-beam}, which is closest to the true channel, in the mean-square sense.
	Sequences are assigned as follows: Orthogonal sequences (OS) of length $\tau$ are assigned through mapping \eqref{eq:orthogonal-pilot-mapping}; through \SequenceGeneratingAlgorithm, Algorithm~\ref{alg:generate-sequences}, with $\CN{\bm{0}}{\bm{I}}$ (White algorithm); and through \SequenceGeneratingAlgorithm, Algorithm~\ref{alg:generate-sequences}, with $\CN{\bm{0}}{\bm{R}}$ (Correlated algorithm) where $\bm{R}$ is generated with the \KnownAlgorithm, Algorithm~\ref{alg:correlation-known}, in Fig.~\ref{fig:a-off-dft-nps}, and the \UnknownAlgorithm, Algorithm~\ref{alg:correlation-unknown}, in Fig.~\ref{fig:a-off-dft-ps}.
	In Fig.~\ref{fig:a-off-dft-nps} the terminal knows the phase shift while in Fig.~\ref{fig:a-off-dft-ps} it does not. However, in both figures the line titled ``No CSI'' corresponds to the case where the terminal has no channel knowledge, both $\channel$ and $\theta$ are unknown. }
	\label{fig:a-off-dft}
\end{figure*}

Further, we also simulate the more practical scenario when the channel estimate at the terminal is imperfect. The simulation scenario is the same as in Fig.~\ref{fig:a-off-dft}. However, the terminal has estimated the angle-of-arrival with an error of $\mathcal{N}(0, 0.1/(\SNR\cos^2\phi))$, where $\phi$ is the true angle of arrival, i.e., it scales with $\SNR$. This choice of $0.1$ corresponds to the Cramér-Rao lower bound when the downlink SNR is approximately 22 dB \emph{lower} than $\SNR$ \cite{Stoica89}, which means that in practice we will have much better angle estimates. The result is presented in Fig.~\ref{fig:dl-error-crlb-angle-0.1}. We can see that even though the channel estimates are imperfect, the proposed methods have a gain over the ``No CSI'' case. However, if the angle-of-arrival estimation error at the terminal is larger, this gain vanishes. Hence, our proposed methods should be used when the downlink channel estimates are accurate.
	
\begin{figure*}
	\centering
	\includegraphics{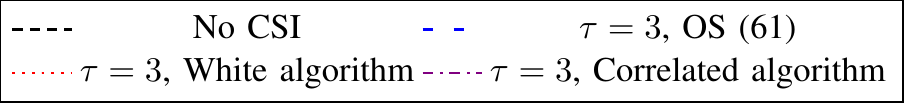} \vspace{1mm}\\
	\subfloat[phase shift $\theta$ known]{
		\centering
		\includegraphics{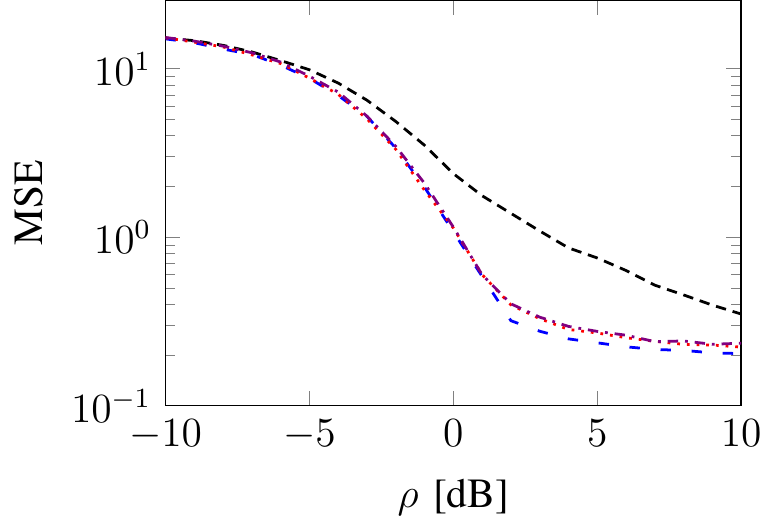}
		\label{}
	}
	\subfloat[phase shift $\theta$ unknown]{
		\centering
		\includegraphics{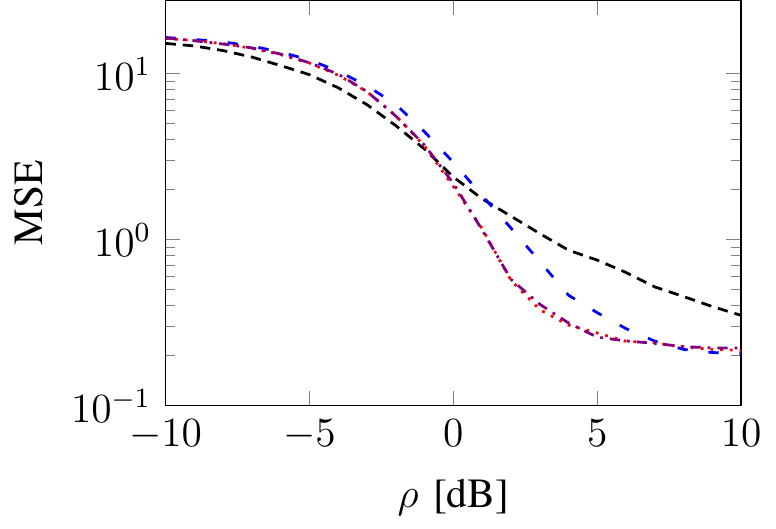}
		\label{}
	}
	\caption{ The same simulation scenario as in Fig. \ref{fig:a-off-dft} but there is an error in the estimated angle-of-arrival at the terminal of $\mathcal{N}(0, 0.1/(\SNR\cos^2\phi))$ degrees. }
	\label{fig:dl-error-crlb-angle-0.1}
\end{figure*}

In the next simulation setup, we study a grid-of-beams world where the beams correspond to a packing in the Grassmannian manifold \cite{Love08}, generated through algorithms in \cite{Dhillon08}. The results can be seen in  Figs.~\ref{fig:a-on-g-nps} and~\ref{fig:a-on-g-ps} with known and unknown phase shifts, respectively. Generally, the metric values are smaller in the Grassmannian grid-of-beams world than in the DFT grid-of-beams worlds since the packing is designed such that there is a small correlation between the beams. Hence, the choice of sequence mapping is less important in this scenario than in the DFT grid-of-beams scenario, at least at the error probabilities studied in the simulations. However, an improvement of more than 1~dB can be seen at error probability $10^{-4}$ in both the cases with and without the knowledge of the phase shift $\theta$ at the terminal.

\begin{figure*}
	\centering
	\includegraphics{leg-a-off-dft} \vspace{1mm}\\
	\subfloat[phase shift $\theta$ known]{
		\centering
		\includegraphics{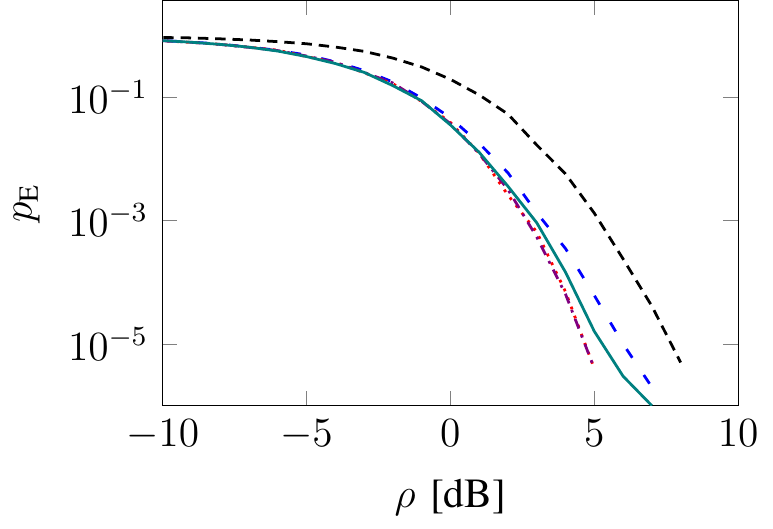}
		\label{fig:a-on-g-nps}
	}
	\subfloat[phase shift $\theta$ unknown]{
		\centering
		\includegraphics{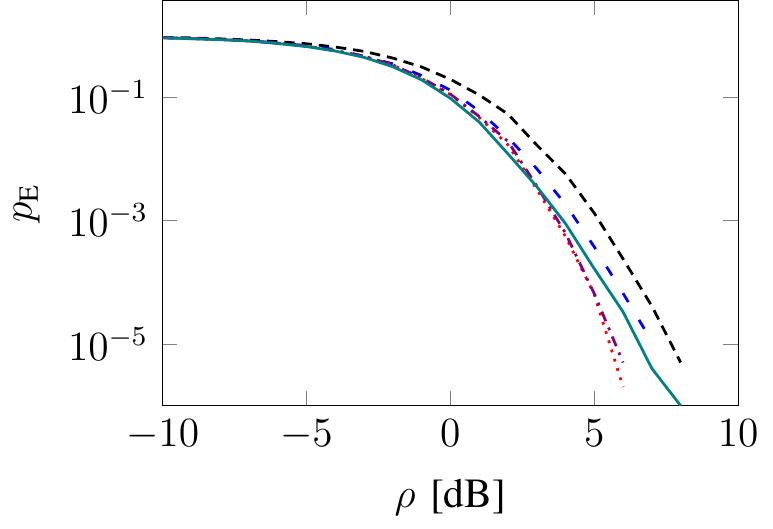}
		\label{fig:a-on-g-ps}
	}
\caption{Probability of incorrect detection in a Grassmannian grid-of-beams world, with orthogonal sequences (OS) of length $\tau$ that are assigned through mapping \eqref{eq:orthogonal-pilot-mapping} and sequences generated with the \SequenceGeneratingAlgorithm, Algorithm~\ref{alg:generate-sequences}, with $\CN{\bm{0}}{\bm{I}}$ (White algorithm) and $\CN{\bm{0}}{\bm{R}}$ (Correlated algorithm) where $\bm{R}$ is generated with the \KnownAlgorithm, Algorithm~\ref{alg:correlation-known}, in Fig.~\ref{fig:a-on-g-nps}, and the \UnknownAlgorithm, Algorithm~\ref{alg:correlation-unknown}, in Fig.~\ref{fig:a-on-g-ps}. The terminal is located on the grid. In Fig.~\ref{fig:a-on-g-nps} the terminal knows the phase shift while in Fig.~\ref{fig:a-on-g-ps} it does not. However, in both figures the line titled ``No CSI'' corresponds to the case where the terminal has no channel knowledge, both $\channel$ and $\theta$ are unknown. }
\label{fig:a-on-g}
\end{figure*}

We also study scenarios where the channel is Rician fading, $\channel=\channel_\text{LoS}+\channel_\text{NLoS}$, i.e., consisting of a free-space line-of-sight component, $\channel_\text{LoS}$, which is known at the terminal, and a non-line-of-sight component which is i.i.d. Rayleigh fading, $\channel_\text{NLoS} \sim\CN{\bm{0}}{\sigma^2\bm{I}}$, and unknown at the terminal. 
Specifically, Fig. \ref{fig:rice-off} shows results for the case where  $\sigma^2=0.1$. Here, the line-of-sight component is 10 times stronger than the non-line-of-sight component. The terminal quantizes $\channel_\text{LoS}$ to the closest beam in the DFT grid-of-beams and transmits the corresponding sequence (same as in Fig.~\ref{fig:a-off-dft}) in the uplink. Note that, in this case, in addition to not having $\channel\in\channelset$, we do not conform completely to our system model in Section~\ref{sec:system-model} where we assumed that $\norm{\channel}^2 = M\beta$, which is not the case here, instead we have $\EX\left\{\norm{\channel}^2\right\} = M(\beta+\sigma^2)$, but Fig.~\ref{fig:rice-off} shows that the developed algorithms can nevertheless be applied. Here, we have two sources of error, the quantization error and the non-line-of-sight component. We can conclude that this channel model and beam set contain enough structure to improve the channel estimates in the uplink.

\begin{figure}
	\centering
	\includegraphics{leg-a-off-dft} \vspace{1mm}\\
	\subfloat[phase shift $\theta$ known]{
		\centering
		\includegraphics{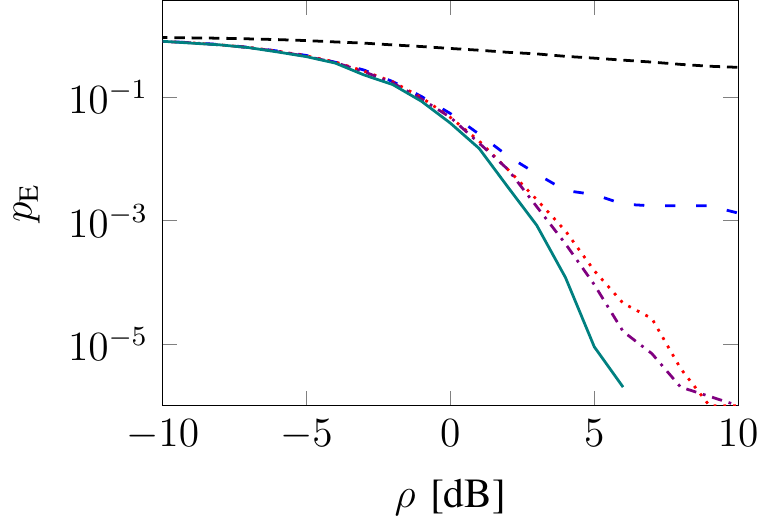}
		\label{}
	}
	\subfloat[phase shift $\theta$ unknown]{
		\centering
		\includegraphics{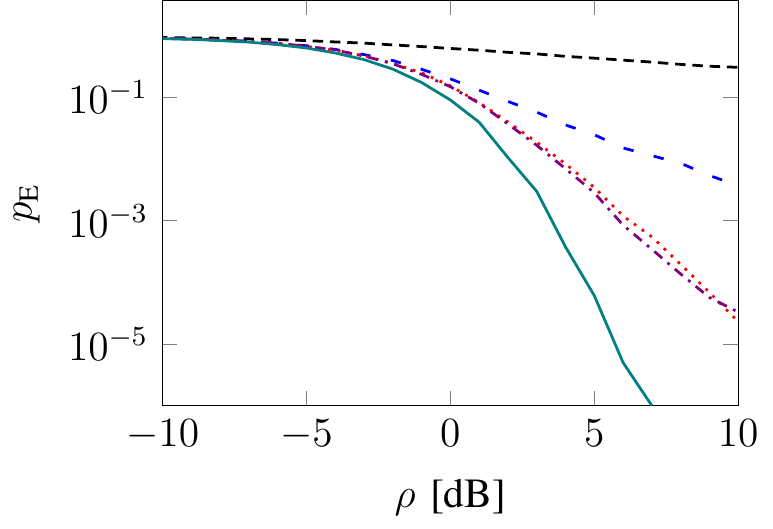}
		\label{}
	}\\
	\vspace*{10pt}
	\hspace*{1.2pt}
	\subfloat[phase shift $\theta$ known]{
		\centering
		\includegraphics{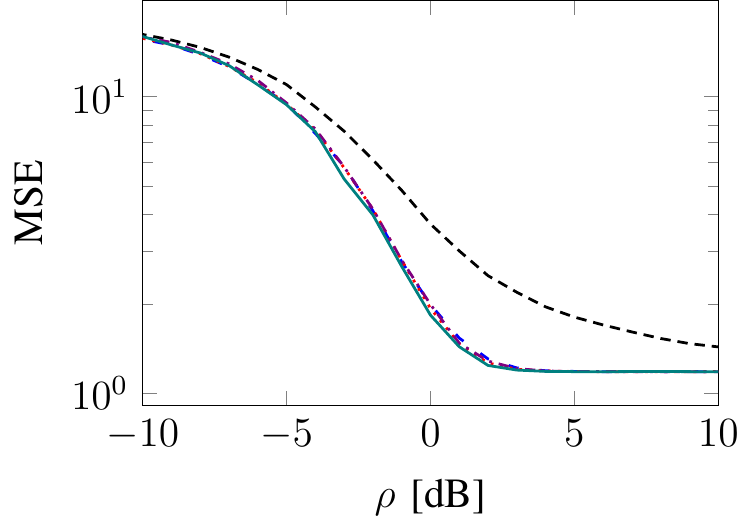}
		\label{}
	}
	\hspace*{1pt}
	\subfloat[phase shift $\theta$ unknown]{
		\centering
		\includegraphics{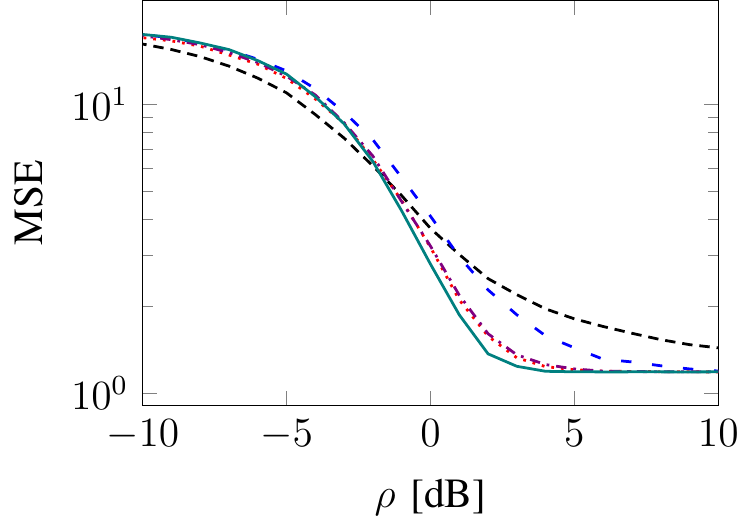}
		\label{}
	}
	\caption{ Simulated Rician fading channel where the line-of-sight component is a line-of-sight channel with arbitrary angle-of-arrival and known perfectly by the terminal. The terminal acts as in Fig.~\ref{fig:a-off-dft}. }
	\label{fig:rice-off}
\end{figure}

Finally, we study the cases where reciprocity does not hold. We evaluate the performance in two different scenarios: the uplink channel is line-of-sight with a phase shift and the uplink channel is line-of-sight without a phase shift. In both cases, we evaluate the three estimators, \eqref{eq:no-reciprocity-rayleigh}, \eqref{eq:no-reciprocity-ml-phase}, and \eqref{eq:no-reciprocity-los-without-phase-shift}, introduced in Section~\ref{sec:no-reciprocity}. We consider the cases where the sequences are orthogonal, requiring $\tau=70$, and where we have generated sequences of length $\tau=3$ through the \SequenceGeneratingAlgorithm, Algorithm~\ref{alg:generate-sequences}, with $\CN{\bm{0}}{\bm{I}}$. The results of the simulations can be seen in Figs.~\ref{fig:no-rec-los-nps} and~\ref{fig:no-rec-los-ps} for the cases where the uplink channel is line-of-sight without and with a phase shift, respectively. In Fig.~\ref{fig:no-rec-los-nps} and Fig.~\ref{fig:no-rec-los-ps} the detectors \eqref{eq:no-reciprocity-los-without-phase-shift} and \eqref{eq:no-reciprocity-ml-phase} perform best, respectively. However, \eqref{eq:no-reciprocity-rayleigh} also performs relatively well, and is computationally much simpler that the others. Note that, $\tau=70$ is analogous to feeding back the precoder codebook index which best fits the true channel with orthogonal resources, i.e., the benchmark in the case without reciprocity.

We can determine the gain of using reciprocity in the CSI feedback by comparing Fig.~\ref{fig:a-on-dft} and Fig.~\ref{fig:no-rec-los}. From these results we can conclude that when reciprocity is not used, much longer sequences are needed to achieve a comparable performance to the scenarios where reciprocity holds and is taken advantage of. Hence, the conclusion can be drawn, that using reciprocity in the CSI feedback beats conventional CSI feedback strategies.

\begin{figure*}
	\centering
	\includegraphics{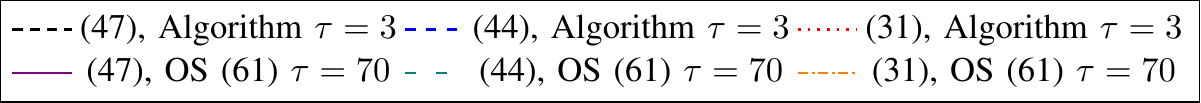} \vspace{1mm} \\
	\subfloat[line-of-sight without phase shift
	]{
		\centering
		\includegraphics{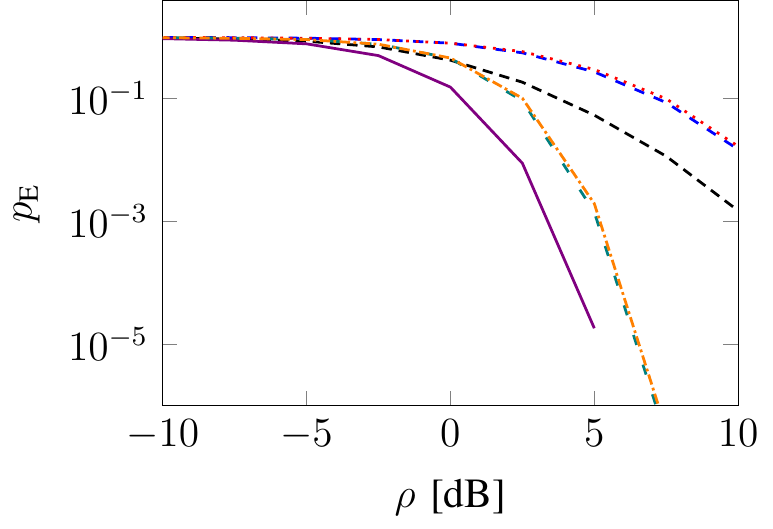}
		\label{fig:no-rec-los-nps}
	}
	\subfloat[line-of-sight with phase shift
	]{
		\centering
		\includegraphics{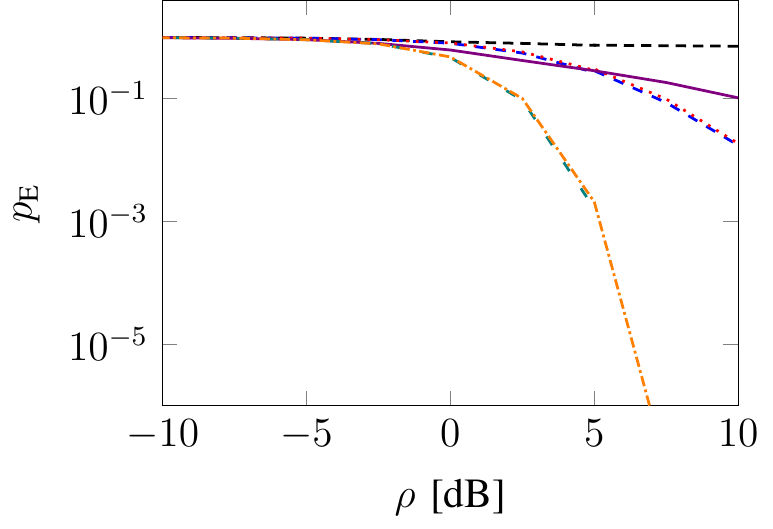}
		\label{fig:no-rec-los-ps}
	}
	\caption{Probability of incorrect detection when reciprocity does not hold and the uplink channel is line-of-sight. Orthogonal sequences (OS) of length $\tau=70$ that are assigned through mapping \eqref{eq:orthogonal-pilot-mapping} and sequences generated with the \SequenceGeneratingAlgorithm, Algorithm~\ref{alg:generate-sequences}, with $\CN{\bm{0}}{\bm{I}}$ are compared.  In Fig.~\ref{fig:no-rec-los-nps} the uplink channel does not contain the phase shift from the unknown distance to the base station while in Fig.~\ref{fig:no-rec-los-ps} it does. }
	\label{fig:no-rec-los}
\end{figure*}

We do not show simulations with a calibrated array, i.e., when the uplink channel is from a set of finite beams, $\uplinkchannel\in\uplinkchannelset$, from Section~\ref{sec:calibrated-array}, since we can draw similar conclusions as in the case in Section~\ref{sec:no-reciprocity}. Additionally, the gain of having a calibrated array is small.

\section{Conclusions}
In this paper, we studied the task of downlink channel detection in a world with a finite number of channels contained in the channel set $\channelset$. 
Such a world can be useful to study, e.g.,  when grid-of-beams-based algorithms are used for channel estimation, i.e., beam sweeping type algorithms. It can also be viewed as a special case of the framework in \cite{Hoon12,Adhikary13}. 
The paper is motivated by claims that reciprocity is not sufficient to obtain accurate channel estimates at the base station when the uplink SNR is low. We address this issue by developing a scheme that combines reciprocity and CSI feedback. The sequence transmitted in the uplink is a function of which of the beams in the channel set that corresponds to the true channel. We have found elegant metrics that should be minimized when designing the sequence mapping in order to minimize the pairwise error probability. 
We also proposed a simple algorithm that can find good sequence mappings.
We have shown that:
\begin{itemize}
	\item In the case where reciprocity holds and the terminal has access to accurate downlink channel estimates, one should use our novel scheme to greatly improve the detection accuracy. Hence, our proposed scheme performs better than the conventional pilot based channel estimation.
	\item When reciprocity holds, but the uplink channel is of poor quality, one should use our scheme and utilize the reciprocity. Our proposed scheme performs better than the conventional way of feeding back the quantized channel   estimates since we can use much shorter uplink sequences to achieve the same performance.
\end{itemize}
In conclusion, whenever uplink-downlink channel reciprocity holds, one should always exploit~it.


\FloatBarrier
\appendices
\section{Proof of Proposition~\ref{prop:pep-theta-unknown} and Proposition~\ref{prop:pep-partial-csi}}\label{app:proof-pep-theta-unknown}
We will prove that the pairwise error probability in the case when reciprocity holds and the terminal knows the downlink channel $\channel$ but $\theta$ is unknown decreases with decreasing $\abs{\truepilot\hermitian\falsepilot\truechannel\hermitian\falsechannel}$ and increasing $\sqrt{\SNR M \beta}$. If the true channel-sequence pair is $(\truechannel,\truepilot)$, such that $\Y=\sqrt{\SNR}e^{\I\theta}\truechannel\truepilot\transpose+\noise$, we make an error in favor of $(\falsechannel,\falsepilot)$ with probability
\begin{align}
	&\Pr\left(\abs{\falsepilot\transpose\Y\hermitian\falsechannel} > \abs{\truepilot\transpose\Y\hermitian\truechannel}\right)\\
	&=\Pr\left(\abs{\sqrt{\SNR}e^{-\I\theta}\falsepilot\transpose\truepilot\conj\truechannel\hermitian\falsechannel+\falsepilot\transpose\noise\hermitian\falsechannel} > \abs{\sqrt{\SNR}e^{-\I\theta}M\beta+\truepilot\transpose\noise\hermitian\truechannel}\right)\\
	&=\Pr\left(\abs{ \sqrt{\SNR M\beta}\truepilot\hermitian\falsepilot\ntruechannel\hermitian\nfalsechannel+e^{\I\theta}\falsepilot\transpose\noise\hermitian\nfalsechannel} > \abs{ \sqrt{\SNR M\beta}+ e^{\I\theta}\truepilot\transpose\noise\hermitian\ntruechannel} \right),\label{eq:normalizedapproach}
\end{align}
where we let $\ntruechannel = \frac{\truechannel}{\sqrt{M\beta}}$ and $\nfalsechannel = \frac{\falsechannel}{\sqrt{M\beta}}$.

Further, let 
\begin{equation}
	\alpha = \truepilot\hermitian\falsepilot \ntruechannel\hermitian\nfalsechannel, \label{eq:startshortcut}
\end{equation} 
\begin{equation}
x = e^{\I\theta}\truepilot\transpose\noise\hermitian\ntruechannel,
\end{equation}
and
\begin{equation}
w = e^{\I\theta}\falsepilot\transpose\noise\hermitian\nfalsechannel.
\end{equation}
The correlation between $x$ and $w$ is 
\begin{align}
\EX\left\{ xw\conj \right\} &= \EX\left\{ \left(e^{\I\theta}\truepilot\transpose\noise\hermitian\ntruechannel\right) \left( e^{\I\theta}\falsepilot\transpose\noise\hermitian\nfalsechannel \right)\conj \right\} \\
&= \EX\left\{  \sum_{i=1}^{\tau}\sum_{j=1}^{M} \sum_{i'=1}^{\tau}\sum_{j'=1}^{M} (\truepilot)_i (\falsepilot\conj)_{i'}  (\ntruechannel)_{j} (\nfalsechannel\conj)_{j'}  \noise_{j,i}\conj  \noise_{j',i'} \right\} \end{align}
\begin{align} 
&=  \sum_{i=1}^{\tau}\sum_{j=1}^{M} (\truepilot)_i (\falsepilot\conj)_{i}  (\ntruechannel)_{j} (\nfalsechannel\conj)_{j}   \\
&=\falsepilot\hermitian\truepilot  \ntruechannel\transpose\nfalsechannel\conj \\
&= \alpha\conj.
\end{align}
Hence, we have a bivariate circularly symmetric Gaussian random variable
\begin{equation}
\begin{bmatrix} x \\ w\end{bmatrix} \sim \CN{\begin{bmatrix}	0\\0\end{bmatrix}}{ \begin{bmatrix} 1 & \alpha\conj \\ \alpha & 1\end{bmatrix}}.\label{eq:mtxxw}
\end{equation}
Let us decompose $w$ as
\begin{equation}
w = ax + by, \label{eq:splitw}
\end{equation}
where $y \sim \CN{0}{1}$ is independent of $x$. We are looking for $a$ and $b$. First note from \eqref{eq:mtxxw} that $\EX\left\{ w | x \right\}~=~\alpha x$ \cite[Th. 10.2]{kay93estimation}. But from \eqref{eq:splitw}, $\EX\left\{ w | x \right\} = ax$. So, $a = \alpha$. Next, we know that $\Var\left\{ w \right\} = \abs{a}^2+\abs{b}^2 = 1 $. So, $\abs{b} = \sqrt{1-\abs{a}^2} = \sqrt{1-\abs{\alpha}^2}$. Therefore, 
\begin{equation}
w = \alpha x + \sqrt{1-\abs{\alpha}^2} y. \label{eq:endshortcut}
\end{equation}
We use this to rewrite  \eqref{eq:normalizedapproach}
\begin{align}
&\Pr\left( \abs{\sqrt{\SNR M\beta}\truepilot\hermitian\falsepilot\ntruechannel\hermitian\nfalsechannel+e^{\I\theta}\falsepilot\transpose\noise\hermitian\nfalsechannel}  >  \abs{\sqrt{\SNR M\beta} + e^{\I\theta}\truepilot\transpose\noise\hermitian\ntruechannel}  \right) \\
&= \Pr\left( \abs{\sqrt{\SNR M\beta} \alpha + \alpha x + \sqrt{1-|\alpha|^2}y}  > \abs{\sqrt{\SNR M\beta} + x}\right)\\
&\overset{(a)}{=}\Pr\left( \abs{|\alpha|(\sqrt{\SNR M\beta}  +  x) + \sqrt{1-|\alpha|^2}y}^2  > \abs{\sqrt{\SNR M\beta} + x}^2\right) \\
&\overset{(b)}{=} \Pr\left(  (1-|\alpha|^2)|y|^2 + 2|\alpha|\sqrt{1-|\alpha|^2}\repart{X\conj y} > (1-|\alpha|^2)\abs{X}^2 \right) \\
&= \Pr\left(  |y|^2 + 2\frac{|\alpha|}{\sqrt{1-|\alpha|^2}}\repart{X\conj y} > \abs{X}^2 \right) \label{eq:simulated} 
\end{align}
where in (a) we used that $y$ is circularly symmetric and in (b) we introduced the random variable $X \sim \CN{\sqrt{\SNR M\beta}}{1}$.
Let $A = \frac{\abs{\alpha}}{\sqrt{1-\abs{\alpha}^2}}$. We continue with the calculations of the pairwise error probability
\begin{align}
& \Pr\left( \abs{y}^2 + 2A\repart{X\conj y} > \abs{X}^2 \right) \\
&\overset{(a)}{=}\Pr\left( \abs{y}^2 + 2A\abs{X}\repart{y} > \abs{X}^2 \right) \\
&\overset{(b)}{=}\frac{1}{2\pi}\int_{-\pi}^{\pi}d\theta_y\,\Pr\left( R_y^2 + 2g\abs{X}R_y > \abs{X}^2 \right)
\\
&=\frac{1}{2\pi}\int_{-\pi}^{\pi}d\theta_y\,\Pr\left(\left( R_y+\abs{X}\left(g-\sqrt{g^2 +1}\right) \right)\underbrace{\left(R_y+\abs{X}\left(g+\sqrt{g^2 +1}\right)\right)}_{>0}>0\right)\\
&=\frac{1}{2\pi}\int_{-\pi}^{\pi}d\theta_y\,\Pr\left(\left( R_y+\abs{X}\left(g-\sqrt{g^2 +1}\right) \right)>0\right)\\
&\overset{(c)}{=} \frac{1}{2\pi}\int_{-\pi}^{\pi}d\theta_y\, \Pr\left(R_y - B\abs{X}>0\right)d\theta_y 
\\
%
&\overset{(d)}{=} \frac{1}{2\pi}\int_{-\pi}^{\pi}d\theta_y\, \int_\complexs dt\, \Pr\left(R_y^2 > B^2\abs{t}^2\right)p_x(t-\sqrt{\eSNR})  \\
&\overset{(e)}{=} \frac{1}{2\pi}\int_{-\pi}^{\pi}d\theta_y\, \int_\complexs dt\, \exp\left(-B^2\abs{t}^2\right) p_x(t-\sqrt{\eSNR}) \\
%
%
&= \frac{1}{2\pi}\frac{1}{\pi}\int_{-\pi}^{\pi}d\theta_y\, \int_\complexs dt\, \exp\left(-B^2\abs{t}^2-\abs{t-\sqrt{\eSNR}}^2\right)  \\
%
&= \frac{1}{2\pi}\int_{-\pi}^{\pi}d\theta_y\,\frac{\exp\left(\frac{\eSNR}{B^2+1} -\eSNR\right)}{B^2+1} \underbrace{\int_\complexs dt\, \frac{B^2+1}{\pi}\exp\left(-\left(B^2+1\right)\abs{t-\frac{\sqrt{\eSNR}}{B^2+1}}^2\right)}_{=1} 
\end{align}
\begin{align}
&\overset{(f)}{=} \frac{e^{-\eSNR}}{2\pi}\int_{-\frac{\pi}{2}}^{\frac{\pi}{2}}d\theta_y\,\left(\frac{1}{B^2+1}e^{\eSNR\frac{1}{B^2+1}} + \frac{B^2}{B^2+1}e^{\eSNR\frac{B^2}{B^2+1} }\right)  \\
&\overset{(g)}{=} \frac{e^{-\eSNR}}{2\pi}\int_{-\frac{\pi}{2}}^{\frac{\pi}{2}}d\theta_y\,\underbrace{\left(ze^{\eSNR z} + (1-z)e^{\eSNR (1-z) }\right)}_{=f(z)} \label{eq:approach2}
\end{align}
where in $(a)$ we use that $y$ is circularly symmetric, i.e., $e^{\I\psi}y$ has the same distribution as  $y$ for any $\psi$, in $(b)$ we marginalize over the phase of $y$ which is a uniform random variable and introduce the random variable $R_y$, the magnitude of $y$ and $g = A\cos\theta_y$, in $(c)$ we introduce $B = \sqrt{g^2 +1}-g$, in $(d)$ we marginalize over the random variable $X$, $p_x(\cdot)$ is the probability distribution of a standard circularly symmetric Gaussian random variable, in $(e)$ we used the fact that $R_y^2$ is exponentially distributed, in $(f)$ we used that $\frac{1}{B}~=~\sqrt{g^2 + 1}+g$ and in $(g)$ we introduced $z~=~\frac{1}{B^2+1}, B^2~=~\frac{1-z}{z}, \frac{B^2}{B^2+1} = 1-z$.

We study the derivative of $f(z)$, the integrand of \eqref{eq:approach2},
\begin{align}
\frac{df(z)}{dz} &= \eSNR z e^{\eSNR z} + e^{\eSNR z} -\eSNR(1-z)e^{\eSNR(1-z)}-e^{\eSNR(1-z)}\\
&= e^{\eSNR z}-e^{\eSNR(1-z)} + \eSNR z\left( e^{\eSNR z} - \underbrace{\frac{1-z}{z}}_{<1}e^{\eSNR(1-z)}\right) \\
&> \underbrace{\left(e^{\eSNR z}-e^{\eSNR(1-z)}\right)}_{>0}(1+\eSNR)  \overset{(a)}{>}0,
\end{align}
where in (a) we used that 
\begin{align}
e^{\eSNR z}-e^{\eSNR (1-z)} &= e^{\eSNR z}\left(1 - e^{\eSNR(1-2z)}\right)>0,
\end{align}
since $z > \frac{1}{2}$.
The derivative of $f(z)$ is always positive meaning that $f(z)$ will increase when increasing $z$ (or $\abs{\alpha}$) for all values where $\cos\theta>0$ i.e., over the whole integration interval in \eqref{eq:approach2}. Since all the function values in the integral increase when $\abs{\alpha}$ increases the pairwise error probability will also increase when $\abs{\alpha}$ increases.

Note that for the case where the terminal does not know $\channel$ it will send the same sequence in every beam, $\pilot_n = \pilot$ for $n=1,\dots,N$, hence we can use this result for both cases.

Using results in \cite[App.~B]{ProakisDigComm}, the pairwise error probability can also be found in closed form, in terms of the Marcum Q-function and Bessel functions. However, using that closed-form expression does not simplify the proof that the pairwise error probability increases in~$\abs{\alpha}$.

\section{Proof of Proposition~\ref{prop:pep-full-csi}}\label{app:proof-pep-theta-known}
We will prove that the pairwise error probability in the case when reciprocity holds and the terminal knows both the downlink channel $\channel$ and the phase shift $\theta$ decreases with decreasing $\repart{\truepilot\hermitian\falsepilot\truechannel\hermitian\falsechannel}$ and increasing $\sqrt{\SNR M \beta}$. If the true channel-sequence pair is $(\truechannel,\truepilot)$, such that $\Y=\sqrt{\SNR}\truechannel\truepilot\transpose+\noise$, the detector makes an error in favor of $(\falsechannel,\falsepilot)$ with probability
\begin{align}
	&\Pr\left(\repart{\falsepilot\transpose\Y\hermitian\falsechannel} > \repart{\truepilot\transpose\Y\hermitian\truechannel}\right)\\
	&=\Pr\left( \repart{\Tr\left\{ \Y\hermitian\left( \falsechannel\falsepilot\transpose-\truechannel\truepilot\transpose \right) \right\}} > 0 \right) \\
	&=\Pr\left( \sqrt{\SNR}\repart{\underbrace{\Tr\left\{\truepilot\conj\truechannel\hermitian\left( \falsechannel\falsepilot\transpose-\truechannel\truepilot\transpose \right)\right\}}_{=M\beta(\alpha-1)}} + \repart{\underbrace{\Tr\left\{ \noise\hermitian\left(\falsechannel\falsepilot\transpose-\truechannel\truepilot\transpose\right) \right\}}_{\sim \CN{0}{2M\beta(1-\repart{\alpha})}}}>0 \right)\\
	&=\Pr\left( \sqrt{\SNR}M\beta(\repart{\alpha}-1)+\mathcal{N}\left(0,M\beta(1-\repart{\alpha})\right) > 0\right)\\
&=Q\left(\sqrt{\SNR M\beta(1-\repart{\alpha})}\right)\\
&=Q\left(\sqrt{\SNR (M\beta-\repart{\truepilot\hermitian\falsepilot\truechannel\hermitian\falsechannel})}\right),\label{eq:pep-theta-known}
\end{align}
where $\alpha$ is defined as in \eqref{eq:startshortcut} and $Q\left(\cdot\right)$ is the Q-function.
Thus, by studying \eqref{eq:pep-theta-known} we see~that~the pairwise error probability decreases when $\repart{\truepilot\hermitian\falsepilot\truechannel\hermitian\falsechannel}$ decreases or when $\sqrt{\SNR M \beta}$ \mbox{increases}.

Note, that in the case where the terminal does not know the downlink channel $\channel$ is the same as when $\pilot_n = \pilot$ for $n=1,\dots,N$. Hence, the pairwise error probability is 
\begin{equation}
	Q\left(\sqrt{\SNR (M\beta-\repart{\truechannel\hermitian\falsechannel})}\right)
\end{equation}
in that case.

\section{Proof that \eqref{eq:no-rec-pep} increases with $\abs{\alpha}$}\label{app:proof3}
We will prove that the pairwise error probability in the case where reciprocity does not hold and the uplink channel is i.i.d. Rayleigh fading decreases with decreasing $\abs{\alpha} = \abs{\truepilot\hermitian\falsepilot}$.
Starting from \eqref{eq:no-rec-pep}, we have
\begin{align}
&\Pr\left( \norm{\abs{\alpha}\bm{x}+\sqrt{1-\abs{\alpha}^2}\bm{y}}^2 > \norm{\bm{x}}^2  \right) \\
&=\Pr\left( -\norm{\bm{x}}^2 + \norm{\bm{y}}^2 + A\bm{x}\hermitian\bm{y} + A\bm{y}\hermitian\bm{x}  > 0 \right) \\
&=\Pr\left(\sum_{m=1}^M -\abs{x_m}^2 + \abs{y_m}^2 + Ax_m\conj y_m + Ay_m\conj x_m > 0 \right),
\end{align}
where $\bm{x}\sim\CN{\bm{0}}{(\rho\beta+1)\bm{I}}$, $\bm{y}\sim\CN{\bm{0}}{\bm{I}}$ and $A =\frac{\abs{\alpha}}{\sqrt{1-\abs{\alpha}^2}}$. We only consider $\abs{\alpha} = \abs{\truepilot\hermitian\falsepilot}< 1$, i.e., $\falsepilot$ is not a phase shifted version of $\truepilot$.
Let 
\begin{equation}
D = \sum_{m=1}^M -\abs{x_m}^2 + \abs{y_m}^2 + Ax_m\conj y_m + Ay_m\conj x_m.
\end{equation}

From \cite[(B-9)]{ProakisDigComm} we have
\begin{align}
\Pr(D>0) &= 1+ \frac{(v_1v_2)^M}{2\pi \I}\int_{-\infty+\I\varepsilon}^{\infty+\I\varepsilon}\frac{dv}{v(v+\I v_1)^M(v-\I v_2)^M}, 
\end{align}
where 
\begin{align}
v_1 &= \sqrt{w^2 + \frac{1-\abs{\alpha}^2}{\rho\beta+1}} -w = \sqrt{w^2 - \frac{2w}{\rho\beta}} -w , \\
v_2 &= \sqrt{w^2 + \frac{1-\abs{\alpha}^2}{\rho\beta+1}} +w = \sqrt{w^2 - \frac{2w}{\rho\beta}} +w , \\
w &= \frac{-\rho\beta(1-\abs{\alpha}^2)}{2(\rho\beta+1)}.
\end{align}

We let 
\begin{align}
f(v)&=\frac{1}{v(v+\I v_1)^M(v-\I v_2)^M},
\\f_2(v) &= (v-\I v_2)^Mf(v) = \frac{1}{v(v+\I v_1)^M}.
\end{align}
Then, by the Cauchy residue theorem, we have
\begin{align}
\Pr(D>0)&= 1+-\frac{(v_1v_2)^M}{2\pi\I}\int_{-\infty+\I\varepsilon}^{\infty+\I\varepsilon}f(v)dv 
\\
&= 1+-\frac{(v_1v_2)^M}{2\pi\I} \left(2\pi\I \operatorname{Res}(f(v),\I v_2)\right) 
\\
&= 1+\frac{(v_1v_2)^M}{2\pi\I} \left(2\pi\I \frac{1}{(M-1)!}\lim_{v\to\I v_2} \frac{d^{M-1}}{dv^{M-1}}(v-\I v_2)^Mf(v)\right) 
\\ &= 1+\frac{(v_1v_2)^M}{2\pi\I} \left(2\pi\I \frac{1}{(M-1)!}\lim_{v\to\I v_2} \frac{d^{M-1}}{dv^{M-1}}f_2(v)\right) 
\\ &= 1+\frac{(v_1v_2)^M}{2\pi\I} \left(2\pi\I \frac{1}{(M-1)!} (-1)\sum_{k=0}^{M-1} \frac{(M+k-1)!}{k!}(v_2)^{-2M}\left(\frac{v_2}{v_1+v_2}\right)^{M+k}\right) 
\\ &= 1-\left(\frac{v_1}{v_2}\right)^M\sum_{k=0}^{M-1} {M+k-1\choose k}\left(\frac{v_2}{v_1+v_2}\right)^{M+k}
%
\\ 
& = 1- \left(\frac{v_1}{v_1+v_2}\right)^M \sum_{k=0}^{M-1}{M+k-1\choose k}\left(\frac{v_2}{v_1+v_2}\right)^k.
\end{align}

Now, introduce 
\begin{equation}
u = \frac{v_1}{v_1+v_2} = \frac{1}{2} - \frac{w}{2\sqrt{w^2-\frac{2w}{\rho\beta}}}.
\end{equation} 
Note that, $ \frac{1}{2} < u <  \frac{1}{2} - \frac{w}{2\abs{w}} = 1$, since $w<0$.
Then, 
\begin{align}
\Pr(D>0)  
&= 1 - u^M\sum_{k=0}^{M-1}\left(\!\!{M\choose k}\!\!\right)(1-u)^k \\
&= 1 - u^M\left(\underbrace{\sum_{k=0}^{\infty}\left(\!\!{M\choose k}\!\!\right)(1-u)^k}_{=(1-(1-u))^{-M}=u^{-M}}-\sum_{k=M}^{\infty}\left(\!\!{M\choose k}\!\!\right)(1-u)^k\right) \\
&\overset{(a)}{=}1-u^Mu^{-M}+u^M\sum_{k=M}^{\infty}\left(\!\!{M\choose k}\!\!\right)(1-u)^k\\
&=  u^M\sum_{k=M}^{\infty}\left(\!\!{M\choose k}\!\!\right)(1-u)^k \\
&=  \sum_{k=M}^{\infty}\left(\!\!{M\choose k}\!\!\right)u^M(1-u)^k, \label{eq:pep2}
\end{align}
where $\left(\!\!{n\choose k}\!\!\right) = {n+k-1\choose k} = \frac{(n+k-1)!}{(n-1)!k!}$ is the multiset coefficient,  and in $(a)$ we used that \linebreak $\frac{1}{(1-x)^n}~=~\sum_{k=0}^\infty \left(\!\!{n\choose k}\!\!\right)x^k$, when $\abs{x}<1$, which is a special case of the 	binomial series. 

Consider the logarithm of the $k$:th term in \eqref{eq:pep2}, 
\begin{equation}
\log\left(u^M(1-u)^k\right) = M\log u + k\log(1-u).
\end{equation}
The derivative with respect to $u$, 
\begin{equation}
\frac{d}{du}\left(  M\log u + k\log(1-u) \right) = \frac{M}{u} - \frac{k}{1-u} \leq \frac{M}{u} - \frac{M}{1-u} < 0, 
\end{equation}
is negative, since $k\geq M$ and $u > \frac{1}{2}$, and hence each of the terms in \eqref{eq:pep2} decrease with $u$, i.e., $\frac{d\Pr(D>0)}{du}<0$.
%
%
%
%
%
Further, we determine 
\begin{align}
\frac{du}{d\abs{\alpha}} &= \frac{du}{dw}\frac{dw}{d\abs{\alpha}} \\
&= \frac{ 1 }{2\left(w-\frac{2}{\rho\beta}\right)\sqrt{w^2-\frac{2w}{\rho\beta}}} \frac{\abs{\alpha}}{\rho\beta+1} \leq 0
\end{align}
and hence,
\begin{align}
\frac{\Pr(D>0)}{d\abs{\alpha}} &= \frac{\Pr(D>0)}{du}\frac{du}{d\abs{\alpha}} \geq 0.
\end{align}
Thus, we can conclude that the pairwise error probability increases when $\abs{\alpha}$ increases, which we intended to show.


\begin{thebibliography}{10}
	\providecommand{\url}[1]{#1}
	\csname url@samestyle\endcsname
	\providecommand{\newblock}{\relax}
	\providecommand{\bibinfo}[2]{#2}
	\providecommand{\BIBentrySTDinterwordspacing}{\spaceskip=0pt\relax}
	\providecommand{\BIBentryALTinterwordstretchfactor}{4}
	\providecommand{\BIBentryALTinterwordspacing}{\spaceskip=\fontdimen2\font plus
		\BIBentryALTinterwordstretchfactor\fontdimen3\font minus
		\fontdimen4\font\relax}
	\providecommand{\BIBforeignlanguage}[2]{{%
			\expandafter\ifx\csname l@#1\endcsname\relax
			\typeout{** WARNING: IEEEtran.bst: No hyphenation pattern has been}%
			\typeout{** loaded for the language `#1'. Using the pattern for}%
			\typeout{** the default language instead.}%
			\else
			\language=\csname l@#1\endcsname
			\fi
			#2}}
	\providecommand{\BIBdecl}{\relax}
	\BIBdecl
	
	\bibitem{confver}
	E.~{Becirovic}, E.~{Björnson}, and E.~G. {Larsson}, ``Reciprocity aided {CSI}
	feedback for massive {MIMO},'' in \emph{2020 54th Asilomar Conference on
		Signals, Systems, and Computers}, 2020, pp. 1022--1027.
	
	\bibitem{redbook}
	T.~L. Marzetta, E.~G. Larsson, H.~Yang, and H.~Q. Ngo, \emph{Fundamentals of
		Massive {MIMO}}.\hskip 1em plus 0.5em minus 0.4em\relax Cambridge University
	Press, 2016.
	
	\bibitem{massivemimobook}
	E.~Bj\"{o}rnson, J.~Hoydis, and L.~Sanguinetti, ``Massive {MIMO} networks:
	{Spectral}, energy, and hardware efficiency,'' \emph{Foundations and
		Trends{\textregistered} in Signal Processing}, vol.~11, no. 3-4, pp.
	154--655, 2017.
	
	\bibitem{Boccardi14}
	F.~{Boccardi}, R.~W. {Heath Jr.}, A.~{Lozano}, T.~L. {Marzetta}, and
	P.~{Popovski}, ``Five disruptive technology directions for {5G},'' \emph{IEEE
		Communications Magazine}, vol.~52, no.~2, pp. 74--80, 2014.
	
	\bibitem{Parkvall17}
	S.~{Parkvall}, E.~{Dahlman}, A.~{Furuskar}, and M.~{Frenne}, ``{NR}: The new
	{5G} radio access technology,'' \emph{IEEE Communications Standards
		Magazine}, vol.~1, no.~4, pp. 24--30, 2017.
	
	\bibitem{Marzetta06}
	T.~L. {Marzetta} and B.~M. {Hochwald}, ``Fast transfer of channel state
	information in wireless systems,'' \emph{IEEE Transactions on Signal
		Processing}, vol.~54, no.~4, pp. 1268--1278, 2006.
	
	\bibitem{Love08}
	D.~J. {Love}, R.~W. {Heath Jr.}, V.~K. {N. Lau}, D.~{Gesbert}, B.~D. {Rao}, and
	M.~{Andrews}, ``An overview of limited feedback in wireless communication
	systems,'' \emph{IEEE Journal on Selected Areas in Communications}, vol.~26,
	no.~8, pp. 1341--1365, 2008.
	
	\bibitem{Caire10}
	G.~Caire, N.~Jindal, M.~Kobayashi, and N.~Ravindran, ``Multiuser {MIMO}
	achievable rates with downlink training and channel state feedback,''
	\emph{IEEE Transactions on Information Theory}, vol.~56, no.~6, pp.
	2845--2866, 2010.
	
	\bibitem{Lee16}
	B.~{Lee}, H.~{Ji}, D.~J. {Love}, and B.~{Shim}, ``Exploiting dominant
	eigendirections for feedback compression for {FDD}-based massive {MIMO}
	systems,'' in \emph{2016 IEEE International Conference on Communications
		(ICC)}, 2016, pp. 1--6.
	
	\bibitem{Huang18}
	P.~{Huang} and Y.~{Pi}, ``A novel {MIMO} channel state feedback scheme and
	overhead calculation,'' \emph{IEEE Transactions on Communications}, vol.~66,
	no.~10, pp. 4550--4562, 2018.
	
	\bibitem{Dahlman18}
	E.~Dahlman, S.~Parkvall, and J.~Sköld, \emph{{5G} {NR}: The Next Generation
		Wireless Access Technology}.\hskip 1em plus 0.5em minus 0.4em\relax Academic
	Press, 2018.
	
	\bibitem{Jindal06}
	N.~Jindal, ``{MIMO} broadcast channels with finite-rate feedback,'' \emph{IEEE
		Transactions on Information Theory}, vol.~52, no.~11, pp. 5045--5060, 2006.
	
	\bibitem{Liu19}
	Z.~{Liu}, L.~{Zhang}, and Z.~{Ding}, ``Exploiting bi-directional channel
	reciprocity in deep learning for low rate massive {MIMO} {CSI} feedback,''
	\emph{IEEE Wireless Communications Letters}, vol.~8, no.~3, pp. 889--892,
	2019.
	
	\bibitem{Abdallah20}
	A.~{Abdallah} and M.~M. {Mansour}, ``Efficient angle-domain processing for
	{FDD}-based cell-free massive {MIMO} systems,'' \emph{IEEE Transactions on
		Communications}, vol.~68, no.~4, pp. 2188--2203, 2020.
	
	\bibitem{Zhong20}
	Z.~{Zhong}, L.~{Fan}, and S.~{Ge}, ``{FDD} massive {MIMO} uplink and downlink
	channel reciprocity properties: Full or partial reciprocity?'' in
	\emph{GLOBECOM 2020 - 2020 IEEE Global Communications Conference}, 2020, pp.
	1--5.
	
	\bibitem{Elijah16}
	O.~{Elijah}, C.~Y. {Leow}, T.~A. {Rahman}, S.~{Nunoo}, and S.~Z. {Iliya}, ``A
	comprehensive survey of pilot contamination in massive {MIMO—5G} system,''
	\emph{IEEE Communications Surveys Tutorials}, vol.~18, no.~2, pp. 905--923,
	2016.
	
	\bibitem{Asplund20}
	H.~Asplund, D.~Astely, P.~von Butovitsch, T.~Chapman, M.~Frenne,
	F.~Ghasemzadeh, M.~Hagström, B.~Hogan, G.~Jöngren, J.~Karlsson,
	F.~Kronestedt, and E.~Larsson, \emph{Advanced Antenna Systems for 5G Network
		Deployments: Bridging the Gap Between Theory and Practice}.\hskip 1em plus
	0.5em minus 0.4em\relax Academic Press, 2020.
	
	\bibitem{Hoon12}
	H.~Huh, G.~Caire, H.~C. Papadopoulos, and S.~A. Ramprashad, ``Achieving
	"massive {MIMO}" spectral efficiency with a not-so-large number of
	antennas,'' \emph{IEEE Transactions on Wireless Communications}, vol.~11,
	no.~9, pp. 3226--3239, 2012.
	
	\bibitem{Adhikary13}
	A.~Adhikary, J.~Nam, J.-Y. Ahn, and G.~Caire, ``Joint spatial division and
	multiplexing—the large-scale array regime,'' \emph{IEEE Transactions on
		Information Theory}, vol.~59, no.~10, pp. 6441--6463, 2013.
	
	\bibitem{Bjornson16}
	E.~Björnson, E.~G. Larsson, and T.~L. Marzetta, ``Massive {MIMO}: ten myths
	and one critical question,'' \emph{IEEE Communications Magazine}, vol.~54,
	no.~2, pp. 114--123, 2016.
	
	\bibitem{kay98detection}
	S.~M. Kay, \emph{Fundamentals of Statistical Signal Processing, Volume II:
		Detection Theory}, ser. Prentice-Hall Signal Processing Series.\hskip 1em
	plus 0.5em minus 0.4em\relax Prentice-Hall PTR, 1998.
	
	\bibitem{ProakisDigComm}
	J.~G. Proakis and M.~Salehi, \emph{Digital communications}, 5th~ed.\hskip 1em
	plus 0.5em minus 0.4em\relax McGraw-Hill Higher Education, 2008.
	
	\bibitem{Shen16}
	J.~{Shen}, J.~{Zhang}, E.~{Alsusa}, and K.~B. {Letaief}, ``Compressed {CSI}
	acquisition in {FDD} massive {MIMO}: How much training is needed?''
	\emph{IEEE Transactions on Wireless Communications}, vol.~15, no.~6, pp.
	4145--4156, 2016.
	
	\bibitem{Stoica89}
	P.~Stoica and A.~Nehorai, ``{MUSIC}, maximum likelihood, and {Cramer-Rao}
	bound,'' \emph{IEEE Transactions on Acoustics, Speech, and Signal
		Processing}, vol.~37, no.~5, pp. 720--741, 1989.
	
	\bibitem{Dhillon08}
	\BIBentryALTinterwordspacing
	I.~S. Dhillon, R.~W. {Heath Jr.}, T.~Strohmer, and J.~A. Tropp, ``{Constructing
		Packings in Grassmannian Manifolds via Alternating Projection},''
	\emph{Experimental Mathematics}, vol.~17, no.~1, pp. 9 -- 35, 2008. [Online].
	Available: \url{https://doi.org/}
	\BIBentrySTDinterwordspacing
	
	\bibitem{kay93estimation}
	S.~M. Kay, \emph{Fundamentals of Statistical Signal Processing, Volume I:
		Estimation Theory}, ser. Prentice-Hall Signal Processing Series.\hskip 1em
	plus 0.5em minus 0.4em\relax Prentice-Hall PTR, 1993.
	
\end{thebibliography}
\end{document}